%% file: main.tex
\documentclass[aps,pra,reprint,amsmath,amssymb,nofootinbib,longbibliography]{revtex4-2}

\usepackage[T1]{fontenc}
\usepackage{lmodern}
\usepackage{microtype}
\usepackage{amsthm,mathtools,bm}
\usepackage{aliascnt}
\usepackage{booktabs,array,enumitem,graphicx}
\usepackage[hidelinks]{hyperref}
\usepackage[nameinlink,capitalize]{cleveref}

\hypersetup{
  pdftitle={Ancilla-Depth Phase Diagrams for Quantum Reference-Frame Comparison},
  pdfauthor={Maxim V. Churilov},
  pdfsubject={Ancilla-restricted quantum statistical comparison and positive maps},
  pdfkeywords={quantum reference frames, channel comparison, k-positive maps, depolarizing channels, channel deficiency}
}

\newtheorem{theorem}{Theorem}[section]
\newaliascnt{lemma}{theorem}
\newtheorem{lemma}[lemma]{Lemma}
\aliascntresetthe{lemma}
\newaliascnt{proposition}{theorem}
\newtheorem{proposition}[proposition]{Proposition}
\aliascntresetthe{proposition}
\newaliascnt{corollary}{theorem}
\newtheorem{corollary}[corollary]{Corollary}
\aliascntresetthe{corollary}
\newaliascnt{definition}{theorem}
\newtheorem{definition}[definition]{Definition}
\aliascntresetthe{definition}
\newaliascnt{remark}{theorem}

\aliascntresetthe{remark}
\newaliascnt{example}{theorem}
\newtheorem{example}[example]{Example}
\aliascntresetthe{example}

\crefname{theorem}{theorem}{theorems}
\crefname{lemma}{lemma}{lemmas}
\crefname{proposition}{proposition}{propositions}
\crefname{corollary}{corollary}{corollaries}
\crefname{definition}{definition}{definitions}
\crefname{remark}{remark}{remarks}
\crefname{example}{example}{examples}

\newcommand{\M}{\mathsf M}
\newcommand{\CPTP}{\mathsf{CPTP}}
\newcommand{\id}{\mathrm{id}}
\newcommand{\Tr}{\operatorname{Tr}}
\newcommand{\D}{\mathcal D}
\newcommand{\Rmap}{\mathcal R}
\newcommand{\ketbra}[2]{|#1\rangle\!\langle#2|}
\newcommand{\dist}{\operatorname{dist}}

\input{physics_format}

\begin{document}

\title{Ancilla-Depth Phase Diagrams for Quantum Reference-Frame Comparison}
\author{Maxim V. Churilov}
\email{churilovm1305@gmail.com}
\affiliation{Independent Researcher, Orenburg, Russia}
\date{May 8, 2026}

\begin{abstract}
Comparing two noisy quantum reference frames as statistical experiments depends on the dimension of the ancillary memory available to the decision procedure.  For finite-dimensional channels $A$ and $B$ with invertible $A$, we show that exact simulation of all measurements assisted by an $r$-dimensional ancilla is equivalent to $r$-positivity of the unique factor $\Gamma=BA^{-1}$.  The hierarchy can be realized by physical channel pairs: every unital, trace-preserving map that is $k$-positive but not $(k+1)$-positive embeds as the factor between $\D_a$ and $\Gamma\circ\D_a$ on an exact interval determined by the smallest Choi eigenvalue.  For depolarizing source and target channels $\D_a$ and $\D_b$, including negative and singular source parameters, the phase boundary is
\[
 \D_a\succeq_r\D_b
 \quad\Longleftrightarrow\quad
 -\frac{1}{dr-1}\le \frac{b}{a}\le 1
 \qquad (a\ne0).
\]
We derive closed formulas for an operational state-restricted level-$r$ deficiency and for the distance to every physical post-processing,
\[
 \delta_{\rm phys}(\D_b\mid\D_a)
 =\left(1-\frac{1}{d^2}\right)\dist(b,I_a),
 \qquad
 I_a=\operatorname{conv}\!\left\{a,-\frac{a}{d^2-1}\right\}.
\]
The largest physical conversion cost hidden from all tests using ancillas of dimension at most $k$ is $(d-k)/[d(d^2-1)]$.  An untouched $m$-level spectator changes the first detecting external level from $k+1$ to $\lfloor k/m\rfloor+1$.  A transpose--depolarizing construction shows that the separation is not confined to depolarizing factors.  The results quantify the distinction between ancilla-restricted statistical simulation and implementation by a single quantum channel.
For noninvertible source channels we further give an attained semidefinite
factorization deficiency, with an exact Choi-kernel witness for positive
diamond-norm distance from post-processability.  At every fixed number
of adaptive uses, exact conversion and the attained strategy-norm
deficiency are likewise finite semidefinite programmes over deterministic
comb Choi operators.
\end{abstract}

\maketitle

\section{Introduction}

Finite quantum reference frames provide operational descriptions relative to imperfect physical systems rather than ideal classical coordinates \cite{Bartlett2007,GourSpekkens2008,MarvianSpekkens2014}.  After inaccessible degrees of freedom are discarded, a noisy frame can be represented by a channel from a neutral description to the records available in a laboratory.  Comparing two such frames is then a comparison of quantum statistical experiments.  The answer depends on the decision class: an unentangled probe tests positivity of a statistical factor, whereas sufficiently large quantum memory forces complete positivity.

The endpoints are familiar.  Positive maps reproduce unassisted measurement statistics, and completely positive maps are physical post-processings.  Between them lies the hierarchy of $r$-positive maps.  What is less immediate is whether every strict level can occur between two genuine perspective channels, whether the hierarchy admits a closed physical phase diagram, and how far an ancilla-restricted statistical simulation can remain from any physical converter.

For invertible finite-dimensional perspectives these questions have a sharp answer.  Invertibility removes quotient and extension ambiguities and uniquely determines the source-to-target factor as
\begin{equation}
 \Gamma_{B|A}=B\circ A^{-1}.
 \label{eq:factor-intro}
\end{equation}
Exact simulation of every measurement assisted by an $r$-level memory is equivalent to $r$-positivity of this map.  This observation is related to statistical morphisms and restricted randomization criteria \cite{Buscemi2012,Jencova2012,Jencova2016}; our focus is the physical channel pairs, exact phase boundaries, conversion cost, and activation law that follow from it.

The solvable model is isotropic Weyl misalignment.  A depolarizing channel
\begin{equation}
 \D_a(X)=aX+(1-a)\frac{\Tr X}{d}I
 \label{eq:depol}
\end{equation}
is a random-unitary average over discrete phase-space displacements.
Composition multiplies parameters.  For every nonzero physical source
parameter, the unique factor between $\D_a$ and $\D_b$ is
$\D_{b/a}$.  We derive its exact $r$-positivity interval and margin,
including negative source parameters, and treat the singular
completely depolarizing source separately.

A second distinction is essential.  Ancilla-restricted measurement simulation allows the source measurement to depend on the target measurement; it need not arise from one laboratory channel.  We therefore compute the closest physical post-processing separately.  Twirling proves that the optimum is itself depolarizing and gives a closed diamond-deficiency formula.  This exposes a finite gap between statistical simulability and physical convertibility.

\subsection{Relation to prior work and claim boundary}

Quantum Blackwell comparison, statistical morphisms, and channel deficiency are established \cite{Blackwell1953,Shmaya2005,Buscemi2012,Jencova2012,Jencova2016}.  The hierarchy of $k$-positive maps and Schmidt-number witnesses is likewise mature \cite{Choi1975,TerhalHorodecki2000,Stormer2013,VomEnde2025}.  Ancilla dimension and Schmidt-rank resources in channel discrimination have been studied independently \cite{PuzzuoliWatrous2017,CaiaffaPiani2018,BaeEtAl2019}.

The restricted randomization lemma is an adapted finite-dimensional
statement, not a new general randomization theorem.  Likewise, the
matrix-positivity criteria used below are standard.  The contribution is
the combined channel-level analysis: an exact physical embedding radius,
a complete signed-source depolarizing phase diagram, closed state-restricted and
CPTP deficiency formulas, an extremal hidden-conversion gap, and a spectator
activation law for explicit physical random-unitary frame pairs.  These
claims are confined to the finite-dimensional models and decision classes
stated below.

\section{Ancilla-restricted comparison}

Let $A,B:\M_d\to\M_d$ be channels.  For $r\ge1$, write $A\succeq_rB$ if, for every POVM $\{M_y\}$ on the target output tensored with $\mathbb C^r$, there exists a source POVM $\{N_y\}$ such that
\begin{equation}
 \Tr[M_y(B\otimes\id_r)(\rho)]
 =\Tr[N_y(A\otimes\id_r)(\rho)]
 \label{eq:measurement-simulation}
\end{equation}
for every input state $\rho$ and every outcome $y$.

\begin{theorem}[Restricted randomization with an invertible source]
\label{thm:restricted}
Assume $A$ is invertible as a linear map and set $\Gamma=BA^{-1}$.  Then
\begin{equation}
 A\succeq_rB
 \quad\Longleftrightarrow\quad
 \Gamma\text{ is $r$-positive}.
 \label{eq:restricted}
\end{equation}
The factor is trace preserving.  Simulation at all matrix levels is equivalent to $\Gamma$ being a channel.
\end{theorem}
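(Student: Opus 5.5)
Both directions rest on passing to the Heisenberg picture. The simulation condition \eqref{eq:measurement-simulation} is required for every input state $\rho$, and the states span all operators. It is therefore equivalent to the operator identity
\[
(B\otimes\id_r)^\dagger(M_y)=(A\otimes\id_r)^\dagger(N_y).
\]
Since $A$ is invertible, so is $(A\otimes\id_r)^\dagger$, and this identity has the unique solution
\[
N_y=\bigl((A^{-1})^\dagger\otimes\id_r\bigr)\bigl(B^\dagger\otimes\id_r\bigr)(M_y)=(\Gamma^\dagger\otimes\id_r)(M_y),
\]
using $(BA^{-1})^\dagger=(A^{-1})^\dagger B^\dagger$. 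The whole problem thus reduces to asking when $\Gamma^\dagger\otimes\id_r$ sends every target POVM to a POVM.

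First I would show that $\Gamma$ is trace preserving. Since $A$ is a trace-preserving bijection, $\Tr A^{-1}(Y)=\Tr A(A^{-1}Y)=\Tr Y$, so $A^{-1}$ preserves trace, and hence $\Gamma=BA^{-1}$ does too. Equivalently, $\Gamma^\dagger(I)=I$. It follows that $\sum_y N_y=(\Gamma^\dagger\otimes\id_r)(I\otimes I_r)=I\otimes I_r$, so normalization of the $N_y$ is automatic. Whether the $N_y$ form a POVM therefore depends only on positivity.

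($\Leftarrow$) Suppose $\Gamma$ is $r$-positive. Then $\Gamma^\dagger$ is $r$-positive as well, because $(\Gamma\otimes\id_r)^\dagger=\Gamma^\dagger\otimes\id_r$ and the adjoint of a positive map is positive. Hence $N_y\ge0$ for every $y$, and the $N_y$ are the required simulating POVM.

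($\Rightarrow$) Suppose $A\succeq_rB$. Fix any $0\le M\le I\otimes I_r$ and apply the hypothesis to the two-outcome POVM $\{M,\,I-M\}$. By uniqueness, the simulating effect must be $(\Gamma^\dagger\otimes\id_r)(M)$, and it is positive. Every positive operator is a positive multiple of an effect, so $\Gamma^\dagger\otimes\id_r$ is positive. Dualizing, $\Gamma\otimes\id_r$ is positive, that is, $\Gamma$ is $r$-positive. The only point needing care here is the uniqueness of $N_y$. That is exactly where invertibility of $A\otimes\id_r$, and hence of $A$, enters: it rules out the source POVM compensating for a non-positive $\Gamma^\dagger$ through the kernel of $A^\dagger$.

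For the final claim, simulation at all levels $r$ means $\Gamma$ is $r$-positive for every $r$. This is complete positivity, and by Choi's theorem it suffices to take $r=d$. Together with trace preservation, this says $\Gamma$ is a channel. Conversely, if $\Gamma$ is a channel, then $B=\Gamma\circ A$, and $N_y=(\Gamma^\dagger\otimes\id_r)(M_y)$ works at every level.

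The main obstacle I expect is the $\Rightarrow$ direction: recovering positivity of the map from the existence of some simulating POVM. The uniqueness argument above resolves it, and I would state that step carefully.
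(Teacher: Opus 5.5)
Your proposal is correct and follows the paper's argument. For sufficiency you pull the target POVM back by the unital, $r$-positive $\Gamma^*\otimes\id_r$. For necessity you use two-outcome POVMs, where invertibility of $A^*\otimes\id_r$ forces $N=(\Gamma^*\otimes\id_r)(M)\ge0$, and then rescale positive operators into effects; trace preservation of $\Gamma$ comes from $A^{-1}$ preserving trace, exactly as in the paper.
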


\begin{proof}
If $\Gamma$ is $r$-positive, then $\Gamma^*$ is unital and $r$-positive.  Pulling a target POVM back by $\Gamma^*\otimes\id_r$ gives a source POVM and proves \cref{eq:measurement-simulation}.

Conversely, apply the simulation condition to every two-outcome POVM $\{M,I-M\}$.  Equality for all input states gives
\begin{equation}
 (A^*\otimes\id_r)(N)
 =(B^*\otimes\id_r)(M)
 =(A^*\otimes\id_r)(\Gamma^*\otimes\id_r)(M).
\end{equation}
Invertibility of $A^*\otimes\id_r$ forces $N=(\Gamma^*\otimes\id_r)(M)\ge0$.  Scaling arbitrary positive operators into the effect interval proves $r$-positivity.  Trace preservation follows from surjectivity of $A$ and trace preservation of $A,B$.
\end{proof}

\begin{definition}[Complete-positivity depth]
For invertible $A$, define
\begin{equation}
 \ell(B|A)=\min\{r\ge1:A\not\succeq_rB\},
 \label{eq:depth}
\end{equation}
with $\ell=\infty$ if the factor is completely positive.
\end{definition}

This invariant records the least ancillary dimension required to falsify statistical simulability.  It is different from diamond deficiency, which asks for one physical converter.

\section{Universal physical embedding}

The hierarchy is not confined to unphysical examples.

\begin{theorem}[Physical embedding of every strict level]
\label{thm:embedding}
Let $\Gamma:\M_d\to\M_d$ be unital, trace preserving, $k$-positive,
and not $(k+1)$-positive.  Then, for every
$0<a\le a_{\max}(\Gamma)$,
\begin{equation}
 A_a=\D_a,
 \qquad
 B_a=\Gamma\circ\D_a
 \label{eq:embedding}
\end{equation}
are channels and $\ell(B_a|A_a)=k+1$.  With unnormalized Choi
matrices, the exact endpoint is
\begin{equation}
 0<a\le a_{\max}(\Gamma):=
 \frac1{1-d\,\lambda_{\min}(J(\Gamma))}.
 \label{eq:embedding-radius}
\end{equation}
This interval is exact: $B_a$ is completely positive if and only if
$0\le a\le a_{\max}(\Gamma)$.  A directly norm-computable
conservative condition is
\begin{equation}
 0<a<
 \frac1{d\|J(\Gamma)-I_{d^2}/d\|_\infty}.
 \label{eq:embedding-bound}
\end{equation}
\end{theorem}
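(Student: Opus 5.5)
The plan is to reduce everything to \cref{thm:restricted} plus one Choi-matrix computation. First, for $0<a\le 1$ the map $\D_a$ is a channel and is invertible as a linear map, with $\D_a^{-1}=\D_{1/a}$, since composition multiplies parameters. Hence the unique factor is
\[
 \Gamma_{B_a|A_a}=(\Gamma\circ\D_a)\circ\D_a^{-1}=\Gamma .
\]
Once $B_a$ is known to be a channel, \cref{thm:restricted} gives $A_a\succeq_r B_a$ exactly when $\Gamma$ is $r$-positive. The hypotheses on $\Gamma$ then give $A_a\succeq_k B_a$ and $A_a\not\succeq_{k+1}B_a$. By \cref{eq:depth} (and since $r$-positivity is monotone in $r$), this yields $\ell(B_a|A_a)=k+1$. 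The remaining work is to determine exactly when $B_a$ is a channel.

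Next, I compute $B_a$ explicitly. Using the form \cref{eq:depol} and unitality $\Gamma(I)=I$,
\[
 B_a(X)=a\,\Gamma(X)+(1-a)\frac{\Tr X}{d}I .
\]
Trace preservation of $B_a$ is then immediate from trace preservation of $\Gamma$. For the unnormalized Choi matrices, $J(X\mapsto \Tr X\, I/d)=I_{d^2}/d$, so by linearity
\[
 J(B_a)=a\,J(\Gamma)+\frac{1-a}{d}I_{d^2}.
\]
This expression is a commuting combination of $J(\Gamma)$ and the identity, so its eigenvalues are $a\lambda+(1-a)/d$ for $\lambda$ running over the spectrum of $J(\Gamma)$. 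Because $\Gamma$ is not $(k+1)$-positive, it is not completely positive, so $\lambda_{\min}:=\lambda_{\min}(J(\Gamma))<0$. For $a\ge0$ the binding eigenvalue is $\lambda_{\min}$, and the positivity condition is
\[
 a\,\Bigl(\tfrac1d-\lambda_{\min}\Bigr)\le \tfrac1d ,
 \quad\text{i.e.}\quad
 a\le \frac{1}{1-d\lambda_{\min}}=a_{\max}(\Gamma).
\]
Since $\lambda_{\min}<0$, we have $0<a_{\max}<1$. Thus every $a$ in the range \cref{eq:embedding-radius} is a legitimate depolarizing parameter, and the equivalence ``$B_a$ is CP $\Leftrightarrow a\le a_{\max}$'' is exact on $a\ge0$.

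The step I expect to be delicate is the claimed exactness at the lower end, $0\le a$. For $a<0$ the binding eigenvalue is $\lambda_{\max}$, not $\lambda_{\min}$. Since $\Tr J(\Gamma)=d$, the eigenvalues of $J(\Gamma)$ average to $1/d$, so $\lambda_{\max}>1/d$. The resulting constraint $a\ge -1/(d\lambda_{\max}-1)$ is therefore not obviously violated for small negative $a$. I would first try to interpret the ``if and only if'' as a statement within the physical positive-parameter family that the theorem is parametrizing. If the negative side is intended, I would check it separately against the spectrum of $J(\Gamma)$, and flag it if it fails.

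Finally, the conservative bound \cref{eq:embedding-bound} follows from the estimate
\[
 \|J(\Gamma)-I_{d^2}/d\|_\infty\ \ge\ \tfrac1d-\lambda_{\min}.
\]
This gives
\[
 \frac{1}{d\,\|J(\Gamma)-I_{d^2}/d\|_\infty}\ \le\ \frac{1}{1-d\lambda_{\min}}=a_{\max}(\Gamma),
\]
so every $a$ satisfying \cref{eq:embedding-bound} lies in the exact interval \cref{eq:embedding-radius}.
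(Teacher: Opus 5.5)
Your proposal is correct and follows essentially the paper's own argument: you compute $J(B_a)=\frac1d I_{d^2}+a\bigl(J(\Gamma)-\frac1d I_{d^2}\bigr)$, read off the smallest eigenvalue for $a\ge0$ to get the exact radius, and apply \cref{thm:restricted} to the unique factor $B_aA_a^{-1}=\Gamma$. Your derivation of \cref{eq:embedding-bound} via $\|J(\Gamma)-I_{d^2}/d\|_\infty\ge 1/d-\lambda_{\min}$ is equivalent to the paper's perturbation bound. Your caveat about the lower endpoint is also justified: for small negative $a$ the binding eigenvalue is $\lambda_{\max}(J(\Gamma))>1/d$, so $B_a$ stays completely positive, and the stated ``if and only if $0\le a$'' holds only within $a\ge0$, which is the only range the paper's proof treats.
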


\begin{proof}
Because $(k+1)$-positivity fails, $\Gamma$ is not completely positive
and $\gamma_{\min}:=\lambda_{\min}(J(\Gamma))<0$.  At $a=0$,
unitality gives $B_0=\D_0$, whose Choi matrix $I_{d^2}/d$ is positive
definite.  Moreover,
\begin{equation}
 J(B_a)=\frac1dI_{d^2}
 +a\left(J(\Gamma)-\frac1dI_{d^2}\right).
\end{equation}
The identity commutes with $J(\Gamma)$, so the smallest eigenvalue is
\begin{equation}
 \lambda_{\min}(J(B_a))
 =\frac1d+a\left(\gamma_{\min}-\frac1d\right).
\end{equation}
It is nonnegative exactly for
$a\le[1-d\gamma_{\min}]^{-1}$, proving the exact radius.  The norm
condition follows by bounding the perturbation of $I/d$.  Since
$A_a$ is invertible for $a>0$ and
$B_aA_a^{-1}=\Gamma$, \cref{thm:restricted} gives the depth.
\end{proof}

The theorem regularizes the accessible directions, not the factor.  The source and target are physical, while the unique relation between them remains non-CP.

For the generalized reduction factor
$\Rmap_k=\D_{-1/(kd-1)}$, the two Choi eigenvalues are
\begin{equation}
 \lambda_\parallel=-\frac{d-k}{kd-1},
 \qquad
 \lambda_\perp=\frac{k}{kd-1},
 \label{eq:reduction-choi-spectrum}
\end{equation}
with multiplicities one and $d^2-1$, respectively.  Hence
\cref{eq:embedding-radius} gives
\begin{equation}
 a_{\max}(\Rmap_k)=\frac{kd-1}{d^2-1},
\end{equation}
anticipating the sharp physical window obtained below from the
depolarizing phase diagram.

\begin{theorem}[A nondepolarizing transpose wedge at exact ancilla depth two]
\label{thm:transpose-wedge}
For $d\geq2$ define the transpose--depolarizing map
\begin{equation}
 \Theta_\lambda(X)=\lambda X^T+
 (1-\lambda)\operatorname{Tr}(X)\frac{I_d}{d}.
 \label{eq:transpose-depolarizing}
\end{equation}
It is positive exactly for
$-1/(d-1)\leq\lambda\leq1$, whereas for every $r\geq2$ it is
$r$-positive exactly on the completely positive interval
\begin{equation}
 -\frac1{d-1}\leq\lambda\leq\frac1{d+1}.
 \label{eq:transpose-r-positive}
\end{equation}
Consequently, for
\begin{equation}
 \frac1{d+1}<\lambda\leq1,
 \qquad
 0<a\leq\frac1{(d+1)\lambda},
 \label{eq:transpose-physical-wedge}
\end{equation}
the maps
\[
 A_a=\mathcal D_a,
 \qquad
 B_{a,\lambda}=\Theta_\lambda\circ\mathcal D_a
 =\Theta_{a\lambda}
\]
are quantum channels, but their unique factor is $\Theta_\lambda$.
Therefore $A_a\succeq_1B_{a,\lambda}$ while
$A_a\not\succeq_2B_{a,\lambda}$, and the ancilla depth is exactly two
throughout the stated wedge \eqref{eq:transpose-physical-wedge}.
\end{theorem}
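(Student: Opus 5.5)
The proof splits into three parts: positivity of $\Theta_\lambda$, its $r$-positivity for $r\ge2$, and the physical wedge. For positivity, we test on pure states $X=\ketbra{\psi}{\psi}$ and use linearity. Since $X^T=\ketbra{\bar\psi}{\bar\psi}$ is again a rank-one projector, the output $\lambda\ketbra{\bar\psi}{\bar\psi}+(1-\lambda)I/d$ has eigenvalues $\lambda+(1-\lambda)/d$ and $(1-\lambda)/d$. The second is nonnegative iff $\lambda\le1$. The first is nonnegative iff $\lambda\ge-1/(d-1)$. Hence positivity holds exactly on $[-1/(d-1),1]$.

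For the second part, we compute the unnormalized Choi matrix
\[
J(\Theta_\lambda)=\lambda F+\frac{1-\lambda}{d}I_{d^2},
\]
where $F=\sum_{ij}\ketbra{ij}{ji}$ is the swap. Its eigenvalues $\pm1$ on the symmetric and antisymmetric subspaces give the spectrum $\lambda+(1-\lambda)/d$ and $-\lambda+(1-\lambda)/d$. Both are nonnegative iff $-1/(d-1)\le\lambda\le1/(d+1)$, which is the CP interval. Since $r$-positivity implies positivity, for $r\ge2$ it remains to show that $2$-positivity already fails when $1/(d+1)<\lambda\le1$. We pick the two-level ancilla state $\ket{\phi}=(\ket{0}\ket{1}-\ket{1}\ket{0})/\sqrt2\in\mathbb C^d\otimes\mathbb C^2$, using only the first two basis vectors of the system. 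Then $(\Theta_\lambda\otimes\id_2)(\ketbra\phi\phi)$ equals $\lambda\,(T\otimes\id)(\ketbra\phi\phi)$ plus $(1-\lambda)(I_d/d)\otimes(I_2/2)$. Here $(T\otimes\id)(\ketbra\phi\phi)$ is $\tfrac12$ times the swap of a $2\times2$ block, with a $-\tfrac12$ eigenvalue on the antisymmetric vector $(\ket{01}-\ket{10})/\sqrt2$. The resulting eigenvalue along that vector is $-\lambda/2+(1-\lambda)/(2d)$. It is negative iff $\lambda>1/(d+1)$, so $2$-positivity, and hence all $r\ge2$ levels, fail exactly outside the CP interval. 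This explicit $2$-ancilla witness is the main step, and the embedding of $\mathbb C^2$ into $\mathbb C^d$ is the point to get right. Alternatively, one could cite Choi's result that $X\mapsto X^T$ compressed to a $2$-dimensional subspace witnesses non-$2$-positivity.

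For the wedge, the composition is $\Theta_\lambda\circ\D_a=\Theta_{a\lambda}$, because transpose fixes $I$ and preserves trace, so the parameters multiply. By the second part, $B_{a,\lambda}$ is CP iff $a\lambda\le1/(d+1)$; it is also $\ge -1/(d-1)$ since $a\lambda>0$. This is the stated bound $a\le 1/((d+1)\lambda)$, with $a\le1$ automatically because $(d+1)\lambda>1$. Trace preservation is clear, so $B_{a,\lambda}$ is a channel. Since $a>0$, $\D_a$ is invertible, and the factor is $B_{a,\lambda}\D_a^{-1}=\Theta_\lambda$. On the wedge $\lambda\in(1/(d+1),1]$, $\Theta_\lambda$ is positive but not $2$-positive. \Cref{thm:restricted} then gives $A_a\succeq_1B_{a,\lambda}$ and $A_a\not\succeq_2B_{a,\lambda}$, that is, $\ell=2$.
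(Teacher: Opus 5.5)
Your proof takes the same route as the paper. You use the pure-state spectrum for positivity, the swap form $J(\Theta_\lambda)=\lambda F+\frac{1-\lambda}{d}I_{d^2}$ for the CP interval, a Schmidt-rank-two witness against $2$-positivity, and then $\Theta_\lambda\circ\D_a=\Theta_{a\lambda}$ together with \cref{thm:restricted}. The only difference is that you exhibit the witness as a negative output eigenvalue of $\Theta_\lambda\otimes\id_2$, whereas the paper evaluates a Choi expectation on an antisymmetric vector; these are dual forms of the same test.

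There is one computational slip in your witness step. For the singlet input $|\phi\rangle=(|01\rangle-|10\rangle)/\sqrt2$,
\[
 (T\otimes\id_2)(\ketbra{\phi}{\phi})=\tfrac12\bigl(\ketbra{01}{01}+\ketbra{10}{10}-\ketbra{00}{11}-\ketbra{11}{00}\bigr),
\]
which is not $\tfrac12$ times the swap. In fact $(|01\rangle-|10\rangle)/\sqrt2$ is a $+\tfrac12$ eigenvector of this operator, and the $-\tfrac12$ eigenvector is $(|00\rangle+|11\rangle)/\sqrt2$.

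The conclusion is unaffected. The operator is a local-unitary conjugate of $\tfrac12 F$, so its spectrum on the block is still $\{\tfrac12,\tfrac12,\tfrac12,-\tfrac12\}$. The depolarizing part is exactly $\frac{1-\lambda}{2d}I_{2d}$. Hence your eigenvalue $-\lambda/2+(1-\lambda)/(2d)$ and the threshold $\lambda>1/(d+1)$ stand.

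The cleanest fix is to take the input $|\omega_2\rangle=(|00\rangle+|11\rangle)/\sqrt2$. For that input the partial transpose really is $\tfrac12$ times the swap on the $2\times2$ block, so your stated antisymmetric eigenvector is correct. This is exactly the paper's Choi-side witness read as an input state.
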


\begin{proof}
The unnormalized Choi matrix is
\[
 J(\Theta_\lambda)=\lambda F+\frac{1-\lambda}{d}I_{d^2},
\]
where $F$ is the swap.  Positivity of the map is equivalent to positivity of
$\Theta_\lambda(|\psi\rangle\langle\psi|)$ for every unit vector, whose two
eigenvalues are
$[1+(d-1)\lambda]/d$ and $(1-\lambda)/d$.  This gives the stated positive
interval.  On the symmetric and antisymmetric subspaces, the Choi
eigenvalues are respectively
$[1+(d-1)\lambda]/d$ and $[1-(d+1)\lambda]/d$, giving the CP interval.  If
$\lambda>1/(d+1)$, an antisymmetric vector has Schmidt rank two and negative
Choi expectation.  Thus the map is not $2$-positive, and hence not
$r$-positive for any $r\geq2$; the converse follows from complete
positivity.

Composition with $\mathcal D_a$ gives $\Theta_{a\lambda}$ directly.
Condition \eqref{eq:transpose-physical-wedge} places $a\lambda$ in the CP
interval, so both source and target are channels.  Since $a>0$,
$\mathcal D_a$ is invertible and the factor is uniquely
$B_{a,\lambda}A_a^{-1}=\Theta_\lambda$.  Its positivity and failure of
$2$-positivity give the exact depth claim.
\end{proof}

\section{Exact isotropic phase diagram}

The depolarizing family is closed under composition:
\begin{equation}
 \D_x\circ\D_y=\D_{xy}.
 \label{eq:composition}
\end{equation}
Its matrix positivity can be solved exactly.

\begin{lemma}[$r$-positivity of a depolarizing map]
\label{lem:r-positive}
For $1\le r\le d$,
\begin{equation}
 \D_\lambda\text{ is $r$-positive}
 \quad\Longleftrightarrow\quad
 -\frac1{dr-1}\le\lambda\le1.
 \label{eq:r-positive}
\end{equation}
More precisely, the minimum Choi expectation on normalized vectors of
Schmidt rank at most $r$ is
\begin{equation}
 \mu_r(\lambda):=
 \min_{\substack{\|v\|=1\\\operatorname{SR}(v)\le r}}
 \langle v|J(\D_\lambda)|v\rangle
 =
 \begin{cases}
 \displaystyle\frac{1+\lambda(dr-1)}{d},&\lambda\le0,\\[6pt]
 \displaystyle\frac{1-\lambda}{d},&\lambda\ge0.
 \end{cases}
 \label{eq:block-margin}
\end{equation}
\end{lemma}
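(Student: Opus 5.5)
We will work directly with the unnormalized Choi matrix. Writing $\Omega=\sum_{i,j}\ketbra{ii}{jj}$ for the unnormalized maximally entangled projector (so $\Omega=d\,P_\Omega$ with $P_\Omega$ a rank-one projector), the definition \cref{eq:depol} gives
\begin{equation*}
 J(\D_\lambda)=\lambda\,\Omega+\frac{1-\lambda}{d}\,I_{d^2}.
\end{equation*}
For a unit vector $v$ we therefore have $\langle v|J(\D_\lambda)|v\rangle=\lambda\,|\langle\Omega|v\rangle|^2+(1-\lambda)/d$, where $|\Omega\rangle=\sum_i|ii\rangle$. The whole computation of $\mu_r(\lambda)$ thus reduces to determining the range of the overlap $f(v)=|\langle\Omega|v\rangle|^2$ over normalized vectors of Schmidt rank at most $r$. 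We then use the standard criterion that a map is $r$-positive if and only if its Choi matrix has nonnegative expectation on all vectors of Schmidt rank at most $r$. This is the block-positivity form of the Choi--Jamio\l{}kowski correspondence, which we take as standard, in line with the paper's statement that the matrix-positivity criteria are standard.

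The key step is the overlap range. Write $v=\sum_{i,j}V_{ij}|ij\rangle$ with $\|V\|_2=1$ and $\operatorname{rank}V=\operatorname{SR}(v)\le r$. Then $\langle\Omega|v\rangle=\Tr V$. By Cauchy--Schwarz on the singular values, $|\Tr V|\le\|V\|_1\le\sqrt{r}\,\|V\|_2=\sqrt r$, so $f(v)\le r$. Equality is attained by $V=P/\sqrt r$ with $P$ a rank-$r$ diagonal projector, which requires $r\le d$. The minimum of $f$ is $0$, attained for example by a product vector $|0\rangle|1\rangle$ when $d\ge2$. The degenerate case $d=1$ is trivial, since then $r=1$ and the overlap is identically one; we would note it separately or exclude it. By connectedness every intermediate value is also attained, though only the extremes matter.

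Since the expectation is affine in $f$ with slope $\lambda$, the minimum sits at the appropriate endpoint. For $\lambda\le0$ it is at $f=r$, giving $\lambda r+(1-\lambda)/d=[1+\lambda(dr-1)]/d$. For $\lambda\ge0$ it is at $f=0$, giving $(1-\lambda)/d$. This is exactly \cref{eq:block-margin}, and the two branches agree at $\lambda=0$. Requiring $\mu_r(\lambda)\ge0$ yields $\lambda\le1$ from the second branch and $\lambda\ge-1/(dr-1)$ from the first, which proves \cref{eq:r-positive}. Note that $dr-1>0$ in all cases except $d=r=1$.

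The only step needing care is the precise justification of the Schmidt-rank characterization of $r$-positivity, i.e.\ that $(\D_\lambda\otimes\id_r)$ is positive if and only if $J$ is nonnegative on Schmidt-rank-$\le r$ vectors. I would cite this as the standard block-positivity result \cite{TerhalHorodecki2000,Stormer2013}. If a self-contained argument is wanted, one writes $\langle w|(\Phi\otimes\id_r)(\ketbra{u}{u})|w\rangle$ and recognizes it as a Choi expectation on a vector obtained by applying local operators of rank $\le r$ to $|\Omega\rangle$.
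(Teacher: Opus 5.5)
Your proposal is correct and follows the paper's proof. Both write $J(\D_\lambda)=\lambda\ketbra{\Omega_d}{\Omega_d}+\frac{1-\lambda}{d}I_{d^2}$, reduce the problem to the overlap range $0\le|\langle\Omega_d|v\rangle|^2\le r$ over unit vectors of Schmidt rank at most $r$, and take the appropriate endpoint according to the sign of $\lambda$; your bound $|\Tr V|\le\|V\|_1\le\sqrt r\,\|V\|_2$ is exactly the Schmidt-decomposition/Cauchy--Schwarz argument of the paper's appendix, and your remarks on the block-positivity criterion and the excluded case $d=1$ just make explicit what the paper uses implicitly.
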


\begin{proof}
The unnormalized Choi matrix is
\begin{equation}
 J(\D_\lambda)=\lambda\ketbra{\Omega_d}{\Omega_d}
 +\frac{1-\lambda}{d}I_{d^2}.
 \label{eq:depol-choi}
\end{equation}
For $\lambda<0$, its minimum expectation over unit vectors of Schmidt rank at most $r$ occurs at a maximally entangled vector on an $r$-dimensional subspace, because
\begin{equation}
 |\langle\Omega_d|v\rangle|^2\le r.
\end{equation}
The minimum is $[1+\lambda(dr-1)]/d$.  For $\lambda\ge0$, the positive
overlap term is minimized at zero, which is attained already by a
product vector orthogonal to $|\Omega_d\rangle$ when $d\ge2$.  The
minimum is then $(1-\lambda)/d$.  Its sign gives the upper boundary
$\lambda\le1$, while the negative branch gives the lower boundary in
\cref{eq:r-positive}.
\end{proof}

\begin{theorem}[Depolarizing ancilla-depth phase diagram]
\label{thm:phase}
Let $A=\D_a$ and $B=\D_b$, where $a\ne0$ and
\begin{equation}
 a,b\in\left[-\frac1{d^2-1},1\right].
\end{equation}
Then
\begin{equation}
 A\succeq_rB
 \quad\Longleftrightarrow\quad
 -\frac1{dr-1}\le\frac ba\le1.
 \label{eq:phase}
\end{equation}
In particular, for $1\le k<d$,
\begin{equation}
 -\frac1{dk-1}\le\frac ba<-\frac1{d(k+1)-1}
 \label{eq:depth-band}
\end{equation}
if and only if $\ell(B|A)=k+1$.
Writing $\lambda=b/a$, the complete depth function is
\begin{align}
 \ell(B|A)=1
 &\Longleftrightarrow
 \lambda<-\frac1{d-1}\ \text{or}\ \lambda>1,\\
 \ell(B|A)=k+1
 &\Longleftrightarrow
 -\frac1{dk-1}\le\lambda< -\frac1{d(k+1)-1},\\
 \ell(B|A)=\infty
 &\Longleftrightarrow
 -\frac1{d^2-1}\le\lambda\le1,
 \label{eq:complete-depth}
\end{align}
where the middle line ranges over $1\le k<d$.
\end{theorem}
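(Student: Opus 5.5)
The plan is to reduce everything to \cref{thm:restricted} and \cref{lem:r-positive}. First, since $a\ne0$, $\D_a$ is invertible: it acts as multiplication by $a$ on traceless operators and as the identity on $I$. By \cref{eq:composition} we have $\D_{b/a}\circ\D_a=\D_b$. Hence the unique factor is $\Gamma=BA^{-1}=\D_{b/a}$. Note that $\lambda=b/a$ may lie far outside $[-1/(d^2-1),1]$, for instance when $|a|$ is small or $a<0$. The lemma, however, is a statement about the linear map $\D_\lambda$ for arbitrary real $\lambda$, because its proof only uses the Choi matrix \cref{eq:depol-choi} and \cref{eq:block-margin}. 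By \cref{thm:restricted}, $A\succeq_rB$ holds if and only if $\D_\lambda$ is $r$-positive.

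The one point needing care is the range of $r$. \Cref{lem:r-positive} is stated for $1\le r\le d$, and the phase condition \cref{eq:phase} is meant for those $r$. For $r>d$, $r$-positivity coincides with $d$-positivity, which is complete positivity (Choi). So the lemma at $r=d$ gives the CP interval $-1/(d^2-1)\le\lambda\le1$. Combining these yields \cref{eq:phase} for $1\le r\le d$, and complete positivity for larger $r$.

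For the depth function, the thresholds $t_r:=-1/(dr-1)$ increase strictly in $r$. Hence the $r$-positivity intervals $[t_r,1]$ are nested and decreasing, and $\ell(B|A)$ is the least $r$ with $\lambda\notin[t_r,1]$. Each case then follows directly. If $\lambda>1$ or $\lambda<t_1=-1/(d-1)$, level one already fails, so $\ell=1$. If $t_k\le\lambda<t_{k+1}$ with $1\le k<d$, then levels $1,\dots,k$ hold and level $k+1$ fails, so $\ell=k+1$; this is exactly \cref{eq:depth-band}. If $\lambda\in[t_d,1]$, the factor is CP, so $\ell=\infty$. These cases partition $\mathbb R$, so each implication is an equivalence.

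I expect no real obstacle. The only subtle step is justifying the use of the lemma for $\lambda$ outside the physical range. That justification is the observation above: the lemma's Schmidt-rank minimization never used that $\D_\lambda$ is a channel. The hypothesis $a,b\in[-1/(d^2-1),1]$ serves only to make $A,B$ channels, so that \cref{thm:restricted} applies.
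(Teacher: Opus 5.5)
Your proposal is correct and is essentially the paper's own proof: invertibility of $\D_a$ and the composition law \cref{eq:composition} give the unique factor $\D_{b/a}$, and \cref{thm:restricted,lem:r-positive} then settle everything. Your additional points---that \cref{lem:r-positive} holds for every real $\lambda$, that \cref{eq:phase} is implicitly restricted to $1\le r\le d$ (complete positivity governs $r>d$), and that the nested intervals partition $\mathbb R$ to give the depth function---are accurate details the paper leaves implicit.
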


\begin{proof}
The source is invertible and \cref{eq:composition} gives the unique factor $BA^{-1}=\D_{b/a}$.  Apply \cref{thm:restricted,lem:r-positive}.
\end{proof}

\begin{proposition}[Singular isotropic source]
\label{prop:singular-source}
For $A=\D_0$ and any physical target $B=\D_b$,
\begin{equation}
 \D_0\succeq_r\D_b
 \quad\Longleftrightarrow\quad b=0
 \qquad(r\ge1).
 \label{eq:singular-comparison}
\end{equation}
Moreover,
\begin{equation}
 \delta_{\rm phys}(\D_b|\D_0)
 =|b|\left(1-\frac1{d^2}\right).
 \label{eq:singular-gap}
\end{equation}
\end{proposition}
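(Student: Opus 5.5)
The plan is to treat the two claims separately. Both rest on the fact that $\D_0$ is the replacement channel $X\mapsto\Tr(X)\,I/d$, so that $(\D_0\otimes\id_r)(\rho)=\frac{I}{d}\otimes\Tr_1\rho$. In particular, the source output forgets everything about the system part of the input.

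\emph{Comparison.} If $b=0$, then $A=B$ and $N_y=M_y$ works at every level $r$. Conversely, suppose $b\neq0$ and fix any $r\ge1$. I would take the product POVM $M_y=P_y\otimes I_r$, where $\{P_0,P_1\}=\{\ketbra00,I-\ketbra00\}$, and product inputs $\rho_j=\ketbra jj\otimes\sigma$ for $j=0,1$. The target outcome probabilities are then $\Tr[P_0\D_b(\ketbra jj)]=b\,\delta_{j0}+(1-b)/d$, which differ between $j=0$ and $j=1$ because $b\ne0$. On the source side, $(\D_0\otimes\id_r)(\rho_0)=(\D_0\otimes\id_r)(\rho_1)=\frac{I}{d}\otimes\sigma$, so every source POVM gives identical statistics on $\rho_0$ and $\rho_1$. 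Hence \cref{eq:measurement-simulation} fails, and so $\D_0\not\succeq_r\D_b$. Note that \cref{thm:restricted} is not available here since $A$ is singular, but this direct argument does not need it.

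\emph{Deficiency.} I will take $\delta_{\rm phys}(B|A)=\min_{C\in\CPTP}\tfrac12\|B-C\circ A\|_\diamond$; this is the normalization consistent with the abstract's formula at $I_0=\{0\}$. The steps are:

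\begin{enumerate}
\item \textbf{Reduce to replacement channels.} For any channel $C$, the composite $C\circ\D_0$ is the replacement channel $R_\omega(X)=\Tr(X)\,\omega$ with $\omega=C(I/d)$. Conversely, every state $\omega$ arises this way. So the problem becomes $\min_\omega\tfrac12\|\D_b-R_\omega\|_\diamond$.
\item \textbf{Twirl to the optimum.} Conjugate by the Weyl (or full unitary) group $U\cdot U^\dagger$ on both input and output. $\D_b$ is covariant, the diamond norm is unitarily invariant, and the norm is convex, so averaging shows that $\omega=I/d$ is optimal. The quantity to compute is therefore $\tfrac12|b|\,\|\id-\D_0\|_\diamond$.
\item \textbf{Lower bound.} Evaluate $\id-\D_0$ on the maximally entangled state $\Phi$. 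This gives $\Phi-I/d^2$, whose trace norm is $2(1-1/d^2)$.
\item \textbf{Matching upper bound.} Write $\D_0=\frac1{d^2}\sum_{w}W_w\cdot W_w^\dagger$ as the Weyl average. Then
\[
\id-\D_0=\left(1-\tfrac1{d^2}\right)(\id-T),\qquad T=\frac1{d^2-1}\sum_{w\ne0}W_w\cdot W_w^\dagger .
\]
Here $\id-T$ is a difference of two channels, so its diamond norm is at most $2$. This yields exactly \cref{eq:singular-gap}.
\end{enumerate}

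The main obstacle is step 2, the twirling step. The averaging must be over a group that acts covariantly on $\D_b$ and also maps replacement channels to replacement channels. Conjugation $U\cdot U^\dagger$ does both, since it sends $R_\omega$ to $R_{U\omega U^\dagger}$. Averaging over a unitary design then sends every $\omega$ to $I/d$, which is what reduces the minimization to the single channel $R_{I/d}=\D_0$. Everything else is routine once the normalization convention for $\delta_{\rm phys}$ is fixed as above.
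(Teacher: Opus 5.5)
Your proposal is correct and follows the paper's proof: the input-independent output of $\D_0$ rules out simulation (you verify this directly at every $r$ with product witnesses, which also supplies the step ``failure at $r=1$ implies failure at all $r$'' that the paper leaves implicit), and the converter reduces to a replacer, which twirling sends to $\D_0$, leaving the depolarizing diamond distance. You prove that distance inline instead of citing \cref{eq:depol-distance}, and your Weyl splitting $\id-\D_0=(1-1/d^2)(\id-T)$ is an explicit form of the appendix's operator inequality $(\D_0\otimes\id)(X)\succeq X/d^2$, so the upper bound rests on the same decomposition.
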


\begin{proof}
The source output is $\Tr(X)I/d$ and therefore contains no dependence
on the system input.  If $b\ne0$, there is a two-outcome target measurement whose
outcome probabilities differ for two orthogonal pure inputs.  These
input-dependent statistics cannot be simulated from $\D_0$; the
comparison therefore already fails at $r=1$.  If $b=0$, source and target are identical.  For physical conversion, any
channel following $\D_0$ is a replacer.  Twirling the converter cannot
increase the error to the unitarily covariant target, and produces
$\D_0$.  \Cref{eq:depol-distance} then gives
\cref{eq:singular-gap}.
\end{proof}

\begin{proposition}[Nested comparison intervals]
\label{prop:nested-intervals}
For $a\ne0$, define
\begin{equation}
 I_a^{(r)}
 =\operatorname{conv}\left\{a,-\frac a{dr-1}\right\},
 \qquad 1\le r\le d.
 \label{eq:r-interval}
\end{equation}
Then, for physical $b$,
\begin{equation}
 \D_a\succeq_r\D_b
 \quad\Longleftrightarrow\quad b\in I_a^{(r)}.
 \label{eq:interval-order}
\end{equation}
The intervals are nested,
\begin{equation}
 I_a^{(1)}\supseteq I_a^{(2)}
 \supseteq\cdots\supseteq I_a^{(d)}=:I_a,
\end{equation}
and $I_a$ is exactly the CPTP post-processing interval.  Consequently,
physical convertibility is antisymmetric on the depolarizing line:
if $\D_a$ converts to $\D_b$ and $\D_b$ converts to $\D_a$, then
$a=b$.
\end{proposition}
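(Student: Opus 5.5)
The plan is to reduce everything to \cref{thm:phase} and then read off the interval structure. Since $a\ne0$, \cref{thm:phase} applies to any physical $b\in[-1/(d^2-1),1]$: $\D_a\succeq_r\D_b$ holds iff $\lambda=b/a$ satisfies $-1/(dr-1)\le\lambda\le1$. We multiply by $a$ and treat the two signs separately. For $a>0$ the condition becomes $-a/(dr-1)\le b\le a$. For $a<0$ the inequalities reverse and give $a\le b\le -a/(dr-1)$. In either case $b$ lies in the closed segment between $a$ and $-a/(dr-1)$, which is exactly $I_a^{(r)}$ as defined in \cref{eq:r-interval}. This proves \cref{eq:interval-order}.

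One caveat is that $I_a^{(r)}$ may extend beyond the physical range. The statement is only claimed for physical $b$, so no further check is needed.

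Nesting is then immediate. The endpoint $-a/(dr-1)$ has magnitude $|a|/(dr-1)$, which decreases in $r$. Its sign is opposite to that of $a$, so each segment contains the next. The same conclusion follows directly from the operational hierarchy, since $r$-positivity implies $(r-1)$-positivity.

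To identify $I_a=I_a^{(d)}$ with the CPTP post-processing interval, use that $\D_a$ is invertible. Any CPTP $\Phi$ with $\Phi\circ\D_a=\D_b$ must therefore equal $\D_b\D_a^{-1}=\D_{b/a}$. By \cref{thm:restricted}, physical convertibility is equivalent to complete positivity of $\D_{b/a}$. For maps on $\M_d$, $d$-positivity is complete positivity (Choi), so the condition coincides with the case $r=d$, namely $-1/(d^2-1)\le b/a\le1$, that is, $b\in I_a$.

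For antisymmetry, suppose $\D_a\to\D_b$ and $\D_b\to\D_a$ are both physically convertible. First, $b\ne0$: otherwise $\D_0$ would have to convert to $\D_a$ with $a\ne0$, which \cref{prop:singular-source} forbids. Then both $\lambda=b/a$ and $1/\lambda$ lie in $[-1/(d^2-1),1]$. If $\lambda>0$, then $\lambda\le1$ and $1/\lambda\le1$ force $\lambda=1$. If $\lambda<0$, then $|\lambda|\le1/(d^2-1)<1$ gives $|1/\lambda|\ge d^2-1>1/(d^2-1)$ for $d\ge2$, a contradiction. Hence $a=b$.

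I expect no real obstacle here. The only delicate points are the sign bookkeeping for negative $a$, which is handled by writing the set as a convex hull, and excluding $b=0$ in the antisymmetry argument through \cref{prop:singular-source}.
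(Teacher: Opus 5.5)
Your proposal is correct and follows essentially the same route as the paper: you multiply the ratio band of \cref{thm:phase} by $a$, read off nesting from the monotone endpoint $-a/(dr-1)$, and settle antisymmetry by showing that $c=b/a$ and $c^{-1}$ cannot both lie in $[-1/(d^2-1),1]$ unless $c=1$, with the zero cases handled by \cref{prop:singular-source}. The only minor difference is that you identify $I_a^{(d)}$ with the CPTP interval directly, using uniqueness of the factor $\D_{b/a}$ and Choi's theorem ($d$-positivity equals complete positivity on $\M_d$), whereas the paper points to the reachable interval \eqref{eq:reachable-interval} of \cref{thm:physical-gap}; both are valid for $a\ne0$.
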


\begin{proof}
Multiplying the ratio interval in \cref{eq:phase} by $a$ gives
\cref{eq:r-interval}, with the convex hull taking care of the sign of
$a$.  The lower ratio boundary increases monotonically with $r$, which
proves nesting, and the $r=d$ interval is
\cref{eq:reachable-interval}.  If $a,b\ne0$ are mutually convertible,
then $c=b/a$ and $c^{-1}$ both lie in
$[-1/(d^2-1),1]$.  Positive $c$ forces $c=c^{-1}=1$; negative $c$
would force one of $|c|,|c^{-1}|$ to exceed one.  The cases involving
zero follow from \cref{prop:singular-source}.
\end{proof}

\begin{figure}[t]
\centering
\includegraphics[width=\columnwidth]{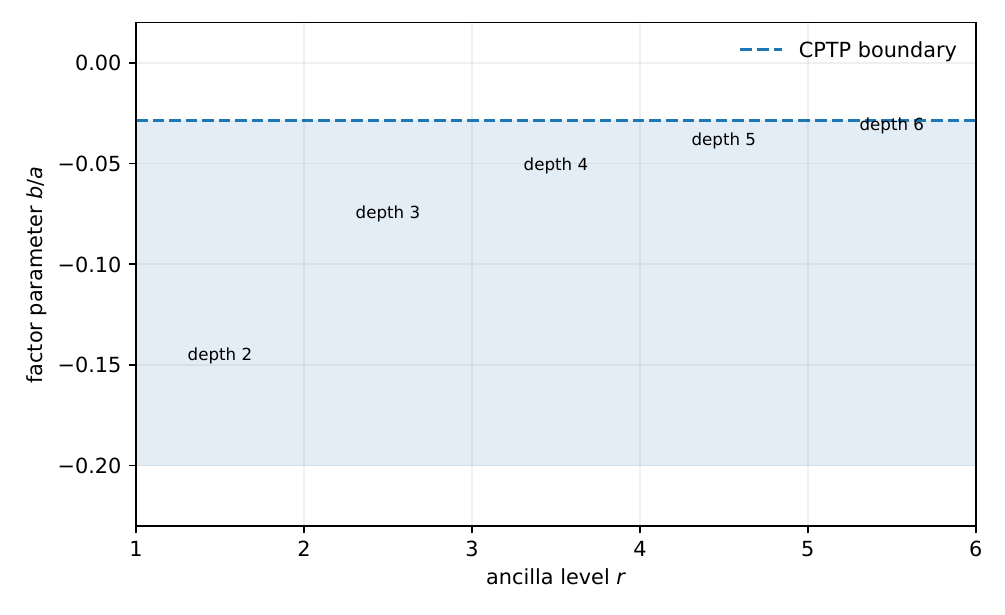}
\caption{Exact positivity bands for the ratio $b/a$ in dimension $d=6$.  Crossing the boundary $-1/(dr-1)$ activates an $r$-dimensional witness.}
\label{fig:phase}
\end{figure}

The boundary factors are normalized generalized reduction maps,
\begin{equation}
 \Rmap_k(X)=\frac{k\Tr(X)I-X}{kd-1}=\D_{-1/(kd-1)}(X).
 \label{eq:reduction}
\end{equation}
They are $k$-positive and not $(k+1)$-positive.  The Schmidt-rank-$(k+1)$ vector
\begin{equation}
 |\omega_{k+1}\rangle
 =\frac1{\sqrt{k+1}}\sum_{j=1}^{k+1}|j,j\rangle
\end{equation}
has Choi expectation
\begin{equation}
 \langle\omega_{k+1}|J(\Rmap_k)|\omega_{k+1}\rangle
 =-\frac1{kd-1}.
 \label{eq:witness-margin}
\end{equation}

\begin{proposition}[Explicit two-outcome decision witness]
\label{prop:explicit-witness}
Let the factor be $\D_\lambda$ and set
\begin{equation}
 M_r=\ketbra{\omega_r}{\omega_r},
 \qquad
 |\omega_r\rangle=\frac1{\sqrt r}\sum_{j=1}^r|j,j\rangle.
\end{equation}
The target two-outcome POVM $\{M_r,I-M_r\}$ can be simulated from an
invertible depolarizing source only if
\begin{equation}
 \langle\omega_r|
 (\D_\lambda^*\otimes\id_r)(M_r)
 |\omega_r\rangle
 =
 \frac{1+\lambda(dr-1)}{dr}\ge0.
 \label{eq:decision-witness-margin}
\end{equation}
Hence $M_r$ is an explicit rejecting decision problem whenever
$\lambda<-1/(dr-1)$.  For the boundary factor $\Rmap_k$ and the first
detecting ancilla $r=k+1$, the negative effect margin is
\begin{equation}
 -\frac1{(k+1)(kd-1)}.
 \label{eq:decision-boundary-margin}
\end{equation}
\end{proposition}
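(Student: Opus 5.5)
The plan is to combine the converse half of \cref{thm:restricted} with a direct evaluation of the adjoint on the chosen effect. In that converse, simulation of the two-outcome POVM $\{M,I-M\}$ from an invertible source forces the source effect to be uniquely $N=(\Gamma^*\otimes\id_r)(M)$, and $N$ must be positive. Here the source is an invertible depolarizing channel and the target is depolarizing, so the factor is $\D_\lambda$ by \cref{eq:composition}. A necessary condition for simulation is therefore $N\ge0$, and hence $\langle\omega_r|N|\omega_r\rangle\ge0$. The proof reduces to computing this single number.

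\textbf{Step 1: the adjoint.} Since $\D_\lambda$ is self-adjoint with respect to the Hilbert--Schmidt inner product, we have $\D_\lambda^*(Y)=\lambda Y+(1-\lambda)\Tr(Y)I/d$. Applying $\D_\lambda^*\otimes\id_r$ to $M_r=\ketbra{\omega_r}{\omega_r}$ gives
\[
 \lambda M_r+\frac{1-\lambda}{d}\,I_d\otimes\Tr_1(M_r).
\]
Because $|\omega_r\rangle$ is maximally entangled on an $r$-dimensional subspace, its reduced state is $\Tr_1(M_r)=P_r/r$, where $P_r$ is the projector onto $\operatorname{span}\{|1\rangle,\dots,|r\rangle\}$ of the ancilla. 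Here $|\omega_r\rangle$ lives in $\mathbb C^d\otimes\mathbb C^r$, with the first $r$ basis vectors of $\mathbb C^d$ used when $r\le d$.

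\textbf{Step 2: the expectation.} Taking the expectation in $|\omega_r\rangle$ gives
\[
 \lambda+\frac{1-\lambda}{d}\cdot\frac1r=\frac{dr\lambda+1-\lambda}{dr}=\frac{1+\lambda(dr-1)}{dr},
\]
which is \cref{eq:decision-witness-margin}. This quantity is negative exactly when $\lambda<-1/(dr-1)$. In that case $N$ is not positive, so no source POVM exists and $M_r$ is a rejecting decision problem.

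\textbf{Step 3: the boundary margin.} Substitute $\lambda=-1/(kd-1)$ and $r=k+1$. The numerator becomes
\[
 1-\frac{d(k+1)-1}{kd-1}=\frac{kd-1-dk-d+1}{kd-1}=-\frac{d}{kd-1}.
\]
Dividing by $d(k+1)$ yields $-1/[(k+1)(kd-1)]$, which is \cref{eq:decision-boundary-margin}. As a consistency check, this equals $1/(k+1)$ times the Choi expectation \cref{eq:witness-margin}. This is expected, because $\langle\omega_r|(\Gamma^*\otimes\id)(M_r)|\omega_r\rangle$ is a rescaled Choi expectation.

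\textbf{Main obstacle.} The only real care point is the partial trace in Step 1: it must be taken over the system factor, giving $P_r/r$ on the ancilla and not $I_r/r$ or $I/d$. One must also stay consistent about which tensor factor the adjoint acts on. Everything else is a direct computation.
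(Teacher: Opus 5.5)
Your proposal is correct and is essentially the paper's proof: the converse of \cref{thm:restricted} forces the simulating effect to equal $(\D_\lambda^*\otimes\id_r)(M_r)$, Hilbert--Schmidt self-adjointness gives $\lambda M_r+(1-\lambda)\frac{I_d}{d}\otimes\frac{P_r}{r}$, and taking the $|\omega_r\rangle$ expectation and then substituting $\lambda=-1/(kd-1)$, $r=k+1$ yields both displayed formulas. One cosmetic remark: with ancilla $\mathbb C^r$ the projector $P_r$ is simply $I_r$, so your warning against obtaining $I_r/r$ is moot, and the expectation value $1/r$ is unaffected either way.
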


\begin{proof}
The depolarizing map is Hilbert--Schmidt self-adjoint.  Its action on
the rank-one effect is
\begin{equation}
 (\D_\lambda\otimes\id_r)(M_r)
 =\lambda M_r
 +(1-\lambda)\frac{I_d}{d}\otimes\frac{P_r}{r},
\end{equation}
where $P_r$ projects onto the chosen $r$-dimensional ancilla
subspace.  Taking the $|\omega_r\rangle$ expectation proves
\cref{eq:decision-witness-margin}.  If a source POVM simulated the
target POVM, invertibility in the proof of \cref{thm:restricted} would
force its first effect to equal
$(\D_\lambda^*\otimes\id_r)(M_r)$, which is impossible when the
displayed expectation is negative.  Substitution of
$\lambda=-1/(kd-1)$ and $r=k+1$ gives
\cref{eq:decision-boundary-margin}.
\end{proof}

\begin{corollary}[All physical depths with sharp embedding window]
\label{cor:all-depths}
For $1\le k<d$ and
\begin{equation}
 -\frac1{d^2-1}\le a<0
 \quad\text{or}\quad
 0<a\le\frac{kd-1}{d^2-1},
 \label{eq:window}
\end{equation}
the channels
\begin{equation}
 A=\D_a,
 \qquad
 B=\D_{-a/(kd-1)}
 \label{eq:pair}
\end{equation}
are random unitary and satisfy $\ell(B|A)=k+1$.  On the positive
branch, the upper bound in \cref{eq:window} is necessary and sufficient
for target complete positivity; every nonzero physical negative source
already gives a physical positive target.
\end{corollary}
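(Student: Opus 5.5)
The plan is to reduce everything to \cref{thm:phase} together with the physical window $a,b\in[-1/(d^2-1),1]$ for depolarizing channels. That window is the $r=d$ case of \cref{lem:r-positive}: complete positivity of $\D_\lambda$ means $-1/(d^2-1)\le\lambda\le1$.

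First I check that $A$ and $B$ are channels. Write $b=-a/(kd-1)$.

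On the negative branch, $-1/(d^2-1)\le a<0$ gives $0<b\le 1/[(kd-1)(d^2-1)]$. This is well inside $[0,1]$ because $kd-1\ge1$, so the target is physical. This is the claim that every nonzero physical negative source gives a physical positive target.

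On the positive branch, $b<0$, and complete positivity of $\D_b$ holds iff $a/(kd-1)\le 1/(d^2-1)$, that is, iff $a\le(kd-1)/(d^2-1)$. The upper bound in \cref{eq:window} is therefore necessary and sufficient. It lies in $(0,1]$ because $k<d$ forces $kd-1\le d^2-1$, so the source $\D_a$ is physical. This also matches $a_{\max}(\Rmap_k)$ from \cref{thm:embedding}.

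The random-unitary property follows from the paper's description of physical depolarizing channels as Weyl averages. Concretely, I expand $\D_\lambda(X)=\sum_{W}p_W WXW^\dagger$ over the $d^2$ Weyl operators, with $p_{\mathbb 1}=\lambda+(1-\lambda)/d^2$ and $p_W=(1-\lambda)/d^2$ otherwise. Nonnegativity of these weights is exactly $-1/(d^2-1)\le\lambda\le1$. This uses the standard twirl identity $\frac1{d^2}\sum_W WXW^\dagger=\Tr(X)I/d$.

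Next, since $a\neq0$, the factor is $\D_{b/a}=\D_{-1/(kd-1)}=\Rmap_k$ by \cref{eq:composition}. Putting $\lambda=-1/(kd-1)$ into \cref{eq:complete-depth} of \cref{thm:phase} puts it on the left endpoint of the band for $k$, since $-1/(kd-1)<-1/((k+1)d-1)$. Hence $\ell(B|A)=k+1$.

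Equivalently, \cref{lem:r-positive} with $r=k$ gives $k$-positivity. The witness \cref{eq:witness-margin} shows failure at $r=k+1\le d$, and \cref{thm:restricted} then gives the depth.

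The main obstacle is only bookkeeping. The two sign branches must be handled separately, and one must check that the endpoints lie in the admissible parameter range. In particular, $k<d$ is what keeps both $(kd-1)/(d^2-1)\le1$ and $k+1\le d$, so \cref{lem:r-positive} applies at level $k+1$.
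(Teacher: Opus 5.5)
Your proposal is correct and follows essentially the same route as the paper: identify the unique factor $\D_{-1/(kd-1)}=\Rmap_k$, obtain the depth from \cref{thm:phase} (equivalently \cref{lem:r-positive}, a Schmidt-rank-$(k+1)$ witness, and \cref{thm:restricted}), and reduce the complete-positivity constraints on $A$ and $B$ to \eqref{eq:window}. The differences are cosmetic: the paper certifies failure of $(k+1)$-positivity with the effect-level witness of \cref{prop:explicit-witness} rather than the Choi expectation \eqref{eq:witness-margin}, and it leaves the random-unitary claim to the Weyl-mixture formula stated just after the corollary, which your twirl computation reproduces.
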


\begin{proof}
For the pair \eqref{eq:pair}, the unique depolarizing factor has parameter $b/a=-1/(kd-1)$.  It is $k$-positive and not $(k+1)$-positive, with the latter failure certified by \Cref{prop:explicit-witness}.  The complete-positivity inequalities for $A$ and $B$ reduce exactly to \eqref{eq:window}, giving both necessity and sufficiency on the positive branch.
\end{proof}

Both channels have a Weyl-mixture realization.  If $W_{xy}$ are the $d^2$ discrete Weyl operators,
\begin{equation}
 \D_\lambda(X)=p_0X+p_1\sum_{(x,y)\ne(0,0)}W_{xy}XW_{xy}^\dagger,
\end{equation}
where
\begin{equation}
 p_0=\frac{1+(d^2-1)\lambda}{d^2},
 \qquad
 p_1=\frac{1-\lambda}{d^2}.
\end{equation}
Thus the entire hierarchy is realized by ordinary discrete frame misalignment.

\section{Exact gap to physical post-processing}

Define the physical conversion deficiency
\begin{equation}
 \delta_{\rm phys}(B|A)=
 \inf_{\Lambda\in\CPTP}
 \frac12\|B-\Lambda\circ A\|_\diamond.
 \label{eq:physical-def}
\end{equation}

\begin{theorem}[Exact depolarizing conversion cost]
\label{thm:physical-gap}
Set $A=\D_a$ and $B=\D_b$ for arbitrary physical parameters
$a,b\in[-1/(d^2-1),1]$, and define the reachable interval
\begin{equation}
 I_a=
 \left[
 \min\left\{a,-\frac a{d^2-1}\right\},
 \max\left\{a,-\frac a{d^2-1}\right\}
 \right].
 \label{eq:reachable-interval}
\end{equation}
Then
\begin{equation}
 \boxed{
 \delta_{\rm phys}(B|A)
 =\left(1-\frac1{d^2}\right)
 \dist\!\left(b,I_a\right)}.
 \label{eq:physical-gap}
\end{equation}
An optimal converter is depolarizing.  For the depth-$(k+1)$
boundary pair \cref{eq:pair}, with either physical sign of $a$,
\begin{equation}
 \delta_{\rm phys}(B|A)
 =\frac{|a|(d-k)}{d(kd-1)}>0.
 \label{eq:boundary-gap}
\end{equation}
\end{theorem}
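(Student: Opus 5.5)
Three steps: reduce the optimisation to depolarizing converters by twirling, compute the diamond distance between two depolarizing channels in closed form, and then minimise over the parameters of $\D_c\circ\D_a$ that are actually reachable.

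\textbf{Step 1: twirling.} Let $\mathcal T(\Lambda)=\int dU\,\mathcal U^\dagger\circ\Lambda\circ\mathcal U$ be the unitary twirl over Haar measure, with $\mathcal U(X)=UXU^\dagger$. It maps CPTP maps to CPTP maps, and every unitarily covariant CPTP map on $\M_d$ is depolarizing, $\D_c$ with $c\in[-1/(d^2-1),1]$. Since $\D_a$ and $\D_b$ commute with every $\mathcal U$,
\[
\mathcal T(\Lambda)\circ\D_a=\mathcal T(\Lambda\circ\D_a),\qquad \D_b=\mathcal T(\D_b).
\]
The diamond norm is invariant under pre- and post-composition with unitary channels and is convex. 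Hence
\[
\|\D_b-\mathcal T(\Lambda)\circ\D_a\|_\diamond\le\|\D_b-\Lambda\circ\D_a\|_\diamond,
\]
so the infimum in \cref{eq:physical-def} may be restricted to $\Lambda=\D_c$. By \cref{eq:composition}, $\D_c\circ\D_a=\D_{ca}$, and as $c$ ranges over the CPTP interval, $ca$ ranges exactly over $I_a$. This also covers $a=0$, where $I_0=\{0\}$.

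\textbf{Step 2: distance between depolarizing channels.} I claim
\[
\tfrac12\|\D_x-\D_y\|_\diamond=\left(1-\tfrac1{d^2}\right)|x-y|,
\]
which is presumably the paper's \cref{eq:depol-distance}. The difference is $(x-y)\Delta$, where $\Delta=\id-\D_0$.

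For the lower bound, feed in the maximally entangled state $\Phi$. The output $(\Delta\otimes\id)(\Phi)=\Phi-I/d^2$ has eigenvalues $1-1/d^2$ (once) and $-1/d^2$ (with multiplicity $d^2-1$). Its trace norm is therefore $2(1-1/d^2)$.

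For the upper bound, the map $\Delta\otimes\id$ is covariant under $U\otimes\bar U$. The maximising input of $\|(\Delta\otimes\id)(\rho)\|_1$ can be taken pure. I would either argue by convexity and covariance that the maximum is attained at $\Phi$, or invoke the standard identity that for a covariant (Weyl-twirled) difference the diamond norm equals the Choi trace norm. If neither is quick, I would compute directly. For a pure input with reduced state $\sigma$, the output is $\psi-\sigma\otimes I/d$ up to the local transpose convention. Its trace norm is $2\lambda_+$, where $\lambda_+$ is the positive part. The Schmidt decomposition reduces this to a small eigenvalue problem, and the maximum $2(1-1/d^2)$ occurs for uniform Schmidt coefficients. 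This upper bound is the main obstacle; I expect the Schmidt-coefficient optimisation to be the delicate part and would try the covariance argument first.

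\textbf{Step 3: minimisation and boundary case.} Combining the two steps,
\[
\delta_{\rm phys}=\left(1-\tfrac1{d^2}\right)\min_{t\in I_a}|b-t|=\left(1-\tfrac1{d^2}\right)\dist(b,I_a).
\]
The minimum is attained because $I_a$ is compact, so an optimal converter exists and is depolarizing.

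For the boundary pair \cref{eq:pair}, take $b=-a/(kd-1)$. Since $k<d$ we have $kd-1<d^2-1$, so $|b|>|a|/(d^2-1)$, and $b$ lies on the opposite side of $0$ from $a$. The nearest point of $I_a$ is therefore $-a/(d^2-1)$. The distance is
\[
|a|\left(\frac1{kd-1}-\frac1{d^2-1}\right)=|a|\,\frac{d(d-k)}{(kd-1)(d^2-1)}.
\]
Multiplying by $(d^2-1)/d^2$ gives $|a|(d-k)/[d(kd-1)]$, which is \cref{eq:boundary-gap}, and it is strictly positive since $k<d$ and $a\ne0$.
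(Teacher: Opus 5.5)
Your Steps 1 and 3 follow the paper's proof: you twirl the converter, reduce to $\D_c$, observe that $ca$ sweeps $I_a$, and minimise the closed-form distance. Your boundary-pair computation is also correct for both signs of $a$. The one step you leave open is the upper bound $\tfrac12\|\id-\D_0\|_\diamond\le1-1/d^2$, and the route you would try first does not work as phrased. Set $f(\rho)=\|((\id-\D_0)\otimes\id)(\rho)\|_1$. This $f$ is convex and invariant under all local unitaries $U\otimes V$, so twirling an input can only \emph{decrease} $f$. Convexity and symmetry therefore place the maximiser at a pure state and make $f$ a function of its Schmidt coefficients, but they do not single out the uniform spectrum.

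The covariance identity you cite is true, but it rests on a different mechanism. For a pure input $\psi$, set $\omega=\frac1{d^2}\sum_{x,y}(W_{xy}\otimes I)\psi(W_{xy}\otimes I)^\dagger\otimes\ketbra{xy}{xy}$. Covariance and block-diagonality give $f(\omega)=f(\psi)$, and the system marginal of $\omega$ is $I/d$. By uniqueness of purifications, $\omega=(\id\otimes\mathcal E)(\Phi)$ for some channel $\mathcal E$ on the reference. Contractivity of the trace norm under $\id\otimes\mathcal E$ then gives $f(\psi)\le f(\Phi)$.

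The paper closes this step more directly in \cref{app:depol-diamond}, using an operator domination. For every $X\succeq0$ on $\mathbb C^d\otimes R$,
$(\D_0\otimes\id_R)(X)=\frac{I_d}{d}\otimes\Tr_{\mathbb C^d}X\succeq X/d^2$. For pure $X=\ketbra vv$ of Schmidt rank $s\le d$, this is the Cauchy--Schwarz bound $\ketbra vv\preceq s\,(I_d\otimes\Tr_{\mathbb C^d}X)$. Hence $(\D_0\otimes\id)(\rho)=\rho/d^2+(1-1/d^2)\sigma$ for some state $\sigma$, so
$\tfrac12\|\rho-(\D_0\otimes\id)(\rho)\|_1=(1-1/d^2)\,\tfrac12\|\rho-\sigma\|_1\le1-1/d^2$.
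This holds for every reference dimension and needs no optimisation over Schmidt spectra.

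Your fallback (c) also works; it is the $r=d$ case of the matrix-determinant-lemma computation in \cref{thm:restricted-depolarizing-deficiency}. But you must actually supply one of these arguments, since this inequality is the only nontrivial estimate in the proof.
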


\begin{proof}
Average an arbitrary converter $\Lambda$ over unitary conjugations:
\begin{equation}
 \overline\Lambda=\int
 \operatorname{Ad}_{U^\dagger}\circ\Lambda\circ
 \operatorname{Ad}_U\,dU.
\end{equation}
Unitary covariance of $\D_a,\D_b$, convexity, and unitary invariance of the diamond norm show that this cannot increase the error.  Every fully unitarily covariant channel is $\D_c$ with
\begin{equation}
 -\frac1{d^2-1}\le c\le1.
\end{equation}
Composition produces parameter $ac$, whose full range is precisely
$I_a$; this remains true at $a=0$, when the interval is the singleton
$\{0\}$.  Finally,
\begin{equation}
 \frac12\|\D_x-\D_y\|_\diamond
 =|x-y|\left(1-\frac1{d^2}\right),
 \label{eq:depol-distance}
\end{equation}
attained on a maximally entangled input.  Minimizing over the interval
of $ac$ proves \cref{eq:physical-gap}; substitution gives
\cref{eq:boundary-gap}.
\end{proof}

\begin{theorem}[Largest physical gap invisible at depth $k$]
\label{thm:max-gap}
For $1\le k<d$, among all physical depolarizing pairs with nonzero
source parameter and exact level-$k$ statistical simulation,
\begin{equation}
 \boxed{
 \max_{\substack{a,b\in[-1/(d^2-1),1],\,a\ne0\\
                  \D_a\succeq_k\D_b}}
 \delta_{\rm phys}(\D_b|\D_a)
 =\frac{d-k}{d(d^2-1)} }.
 \label{eq:max-gap}
\end{equation}
It is attained by
\begin{equation}
 a_*=\frac{kd-1}{d^2-1},
 \qquad
 b_*=-\frac1{d^2-1},
 \label{eq:max-gap-pair}
\end{equation}
whose first detecting ancilla has dimension $k+1$.
\end{theorem}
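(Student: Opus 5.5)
The plan is to reduce the maximization to a one-variable optimization, using two earlier results. \Cref{prop:nested-intervals} says the constraint $\D_a\succeq_k\D_b$ is exactly $b\in I_a^{(k)}=\operatorname{conv}\{a,-a/(dk-1)\}$. \Cref{thm:physical-gap} gives the objective as $(1-1/d^2)\dist(b,I_a)$ with $I_a=I_a^{(d)}$. Since $I_a\subseteq I_a^{(k)}$ and both intervals share the endpoint $a$, the distance is positive only on the extra piece of $I_a^{(k)}$ lying beyond the non-$a$ endpoint $-a/(d^2-1)$. On that piece, $\dist(b,I_a)$ is the linear distance to $-a/(d^2-1)$. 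So for fixed $a$ we push $b$ as far out as both $I_a^{(k)}$ and physicality $b\in[-1/(d^2-1),1]$ allow. I split by the sign of $a$.

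\emph{Case $a>0$.} The optimal target is $b=\max\{-a/(dk-1),-1/(d^2-1)\}$, giving
\[
 g(a)=\min\left\{\frac a{dk-1},\frac1{d^2-1}\right\}-\frac a{d^2-1}.
\]
For $a\le a_*=(dk-1)/(d^2-1)$, $g$ equals $a\,d(d-k)/[(dk-1)(d^2-1)]$, which is nondecreasing in $a$ because $k<d$. For $a\ge a_*$, $g=(1-a)/(d^2-1)$, which is decreasing. Hence the maximum is at $a_*$, where $-a_*/(dk-1)=-1/(d^2-1)$, so $b_*=-1/(d^2-1)$. The value is $g(a_*)=d(d-k)/(d^2-1)^2$. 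Multiplying by $1-1/d^2$ gives $(d-k)/[d(d^2-1)]$. This agrees with \cref{eq:boundary-gap} at $a=a_*$, which serves as a consistency check.

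\emph{Case $a<0$.} Here $|a|\le 1/(d^2-1)$, and the extra piece is $[\,|a|/(d^2-1),\,|a|/(dk-1)\,]$, which automatically lies in the physical range. The gap is at most $|a|\,d(d-k)/[(dk-1)(d^2-1)]$. With $|a|\le1/(d^2-1)$ and $dk-1\ge1$, this is bounded by $d(d-k)/(d^2-1)^2$, i.e.\ by the positive-branch value, with equality possible only when $dk=2$. So the negative branch never exceeds the claimed maximum.

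The step needing the most care is the bookkeeping of the physical constraint $b\ge-1/(d^2-1)$ in the positive branch. This constraint is what makes the function concave-piecewise and produces the interior maximizer $a_*$, rather than $a=1$. Finally, the attaining pair has ratio $b_*/a_*=-1/(dk-1)$. By \cref{thm:phase} (the depth band \eqref{eq:depth-band}), this gives $\ell(B|A)=k+1$, so the first detecting ancilla has dimension $k+1$. Both parameters are physical, since $a_*\le1$ for $k<d$.
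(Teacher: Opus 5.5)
Your proof is correct and takes the same route as the paper: the interval characterization of level-$k$ simulation, the closed CPTP deficiency formula, a split by the sign of $a$, and domination of the negative branch using $|a|\le 1/(d^2-1)$ and $dk-1\ge 1$. The one real difference is the order of optimization, and yours is the more complete one. The paper first pushes the ratio to $b/a=-1/(dk-1)$ and then optimizes $a$ along that ray, which tacitly skips sources $a>a_*$, where that ratio would require an unphysical target $b<-1/(d^2-1)$. Your piecewise function $g(a)$, with $b$ truncated at $-1/(d^2-1)$, covers that region explicitly. Your remark that the negative branch ties with the maximum only when $d=2$, $k=1$ is also correct.
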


\begin{proof}
Write $\lambda=b/a$.  Level-$k$ simulation gives
$-1/(dk-1)\le\lambda\le1$.  A nonzero physical gap is possible only
when $\lambda<-1/(d^2-1)$.  By \cref{eq:physical-gap} it increases
monotonically as $\lambda$ decreases, so the optimum has
$\lambda=-1/(dk-1)$.

If $a>0$, target physicality requires
$a\le(kd-1)/(d^2-1)$, and \cref{eq:boundary-gap} is increasing in
$a$.  This gives \cref{eq:max-gap-pair} and the value in
\cref{eq:max-gap}.  If $a<0$, source physicality gives
$|a|\le1/(d^2-1)$; the same boundary formula is then at most
\begin{equation}
 \frac{d-k}{d(kd-1)(d^2-1)},
\end{equation}
which cannot exceed the positive-branch value.  The depth claim
follows from \cref{eq:depth-band}.
\end{proof}

\begin{corollary}[Calibrated separation of statistical and physical depth]
\label{cor:calibrated-depth-separation}
Fix $1\leq k<d$ and the boundary factor
$\mathcal R_k=\mathcal D_{-1/(kd-1)}$.  For every physical source parameter
$a$ in \cref{eq:window}, the pair
$A=\mathcal D_a$, $B=\mathcal R_k\circ\mathcal D_a$ has all of the following
simultaneously:
\begin{enumerate}[label=(\roman*)]
\item every measurement problem with ancilla dimension $r\leq k$ is
simulated exactly;
\item the first detecting ancilla, $r=k+1$, has the explicit negative effect
margin
\begin{equation}
 \frac{1}{(k+1)(kd-1)};
\end{equation}
\item every physical post-processing incurs the exact error
\begin{equation}
 \delta_{\rm phys}(B|A)=\frac{|a|(d-k)}{d(kd-1)}.
\end{equation}
\end{enumerate}
At $a=(kd-1)/(d^2-1)$ the third quantity is maximal among all physical
level-$k$ depolarizing simulations and equals
$(d-k)/[d(d^2-1)]$.
\end{corollary}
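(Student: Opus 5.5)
The plan is to read the corollary as a repackaging of the depolarizing phase diagram, the explicit witness, and the exact conversion cost, all specialized to the boundary factor $\Rmap_k$. First I record the setup. By \cref{eq:composition}, $B=\Rmap_k\circ\D_a=\D_{-a/(kd-1)}$, so this is exactly the pair \eqref{eq:pair}. \Cref{cor:all-depths} then says that both $A$ and $B$ are physical channels for every $a$ in \cref{eq:window}. Since $a\ne0$, the source is invertible, and the unique factor is $\Gamma=BA^{-1}=\D_{-1/(kd-1)}$. This relies on uniqueness of the factor from \cref{thm:restricted}.

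\emph{Item (i).} \Cref{lem:r-positive} shows that $\D_\lambda$ with $\lambda=-1/(kd-1)$ is $r$-positive whenever $-1/(dr-1)\le\lambda$. For $r\le k$ we have $dr-1\le dk-1$, so $-1/(dr-1)\le -1/(kd-1)$ and the condition holds. \Cref{thm:restricted} then gives $A\succeq_rB$ for all $r\le k$. Equivalently, one can read this off \cref{eq:phase} or the nested intervals in \cref{prop:nested-intervals}.

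\emph{Item (ii).} I apply \cref{prop:explicit-witness} with $r=k+1$, which directly gives the effect expectation $-1/[(k+1)(kd-1)]$ of \cref{eq:decision-boundary-margin}. This is the substitution $\lambda=-1/(kd-1)$ into $[1+\lambda(dr-1)]/(dr)$: the numerator is $1-(kd+d-1)/(kd-1)=-d/(kd-1)$, and the denominator is $d(k+1)$. The statement lists the magnitude of this margin, so I state that the margin is negative with that absolute value. Its negativity, together with the invertibility argument in \cref{thm:restricted}, shows $A\not\succeq_{k+1}B$. Combined with (i), this gives first detecting dimension $k+1$, i.e.\ $\ell(B|A)=k+1$.

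\emph{Item (iii) and the extremal value.} Item (iii) is \cref{eq:boundary-gap} of \cref{thm:physical-gap}, valid for either sign of $a$. The main point to check is the distance computation behind it. Take $a>0$. Then $I_a=[-a/(d^2-1),a]$ and $b=-a/(kd-1)$ lies below the interval, so
\[
 \dist(b,I_a)=a\left(\frac1{kd-1}-\frac1{d^2-1}\right)=\frac{a\,d(d-k)}{(kd-1)(d^2-1)}.
\]
Multiplying by $(d^2-1)/d^2$ gives $a(d-k)/[d(kd-1)]$. For $a<0$, $b$ lies above $I_a$ and the same calculation goes through with $|a|$ in place of $a$.

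For the final sentence, substituting $a=(kd-1)/(d^2-1)$ gives $(d-k)/[d(d^2-1)]$. Maximality among all physical level-$k$ depolarizing simulations is exactly \cref{thm:max-gap}, whose maximizer \eqref{eq:max-gap-pair} is this pair, since $b_*=-a_*/(kd-1)$. No genuine obstacle is expected; the only care needed is the sign bookkeeping in $\dist(b,I_a)$ for negative $a$.
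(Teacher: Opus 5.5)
Your proposal is correct and follows essentially the paper's proof, which simply cites \cref{thm:phase}, \cref{prop:explicit-witness}, \cref{thm:physical-gap} and \cref{thm:max-gap} for items (i)--(iii) and the extremal value. Your derivation of (i) via \cref{lem:r-positive} and \cref{thm:restricted} is just the proof of \cref{thm:phase} written out, your explicit checks of the witness value and of $\dist(b,I_a)$ for both signs of $a$ are accurate, and you are right that item (ii) gives the magnitude of a margin that is actually negative.
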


\begin{proof}
The three assertions are respectively
\cref{thm:phase,prop:explicit-witness,thm:physical-gap}; the extremal statement
is \cref{thm:max-gap}.  Their simultaneous validity is possible because the
first two optimize over measurement-dependent statistical morphisms, whereas
the third requires one completely positive converter for all tests.
\end{proof}

For $r\le k$, the pair \cref{eq:pair} has zero level-$r$ measurement deficiency by \cref{cor:all-depths}, while \cref{eq:boundary-gap} is strictly positive.  This quantifies the difference between measurement-dependent statistical simulation and one physical post-processing channel.

\begin{figure}[t]
\centering
\includegraphics[width=\columnwidth]{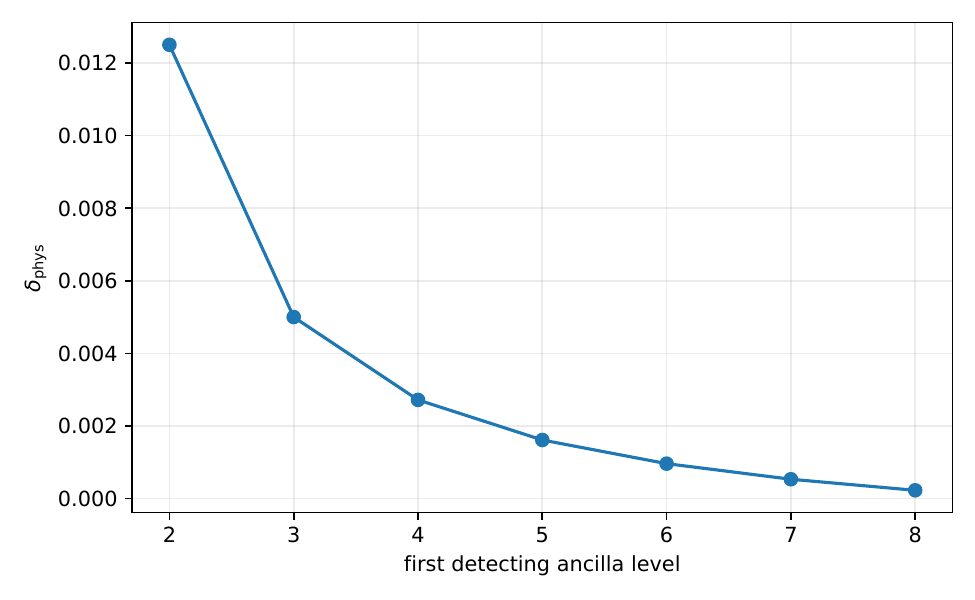}
\caption{Exact physical conversion cost for the boundary hierarchy at fixed source parameter.  Statistical simulation remains exact through level $k$, although the CPTP conversion cost is positive.}
\label{fig:physical-gap}
\end{figure}

\begin{figure}[t]
\centering
\includegraphics[width=\columnwidth]{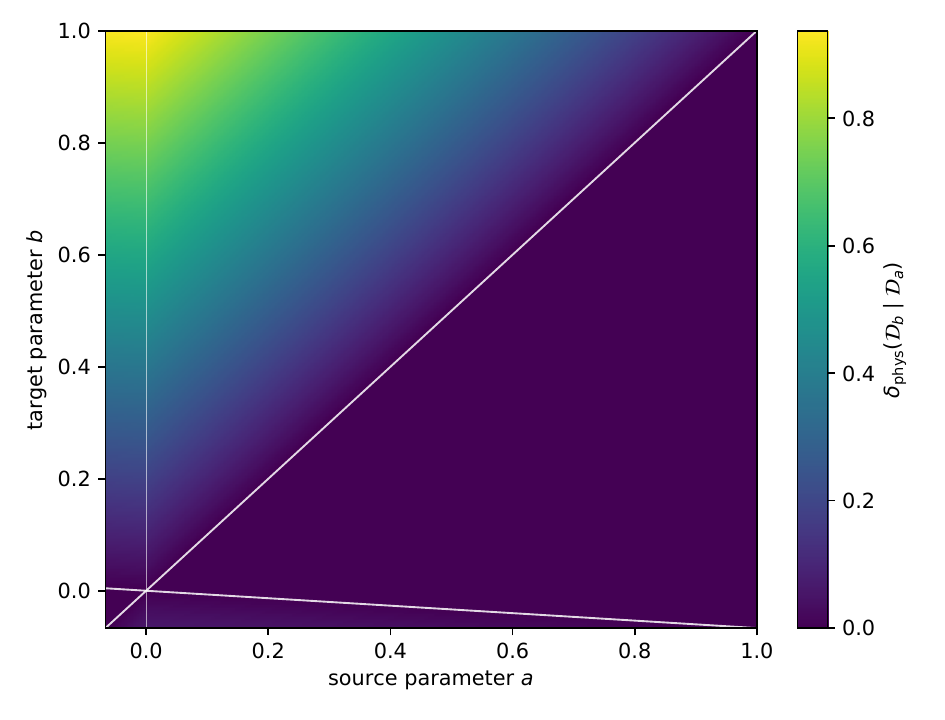}
\caption{Exact CPTP conversion deficiency on the full physical
$(a,b)$ square for $d=4$, including negative and singular sources.
The zero region is the union of the reachable intervals $b\in I_a$;
the vertical section $a=0$ has cost
$|b|(1-1/d^2)$.}
\label{fig:conversion-landscape}
\end{figure}

\section{Low-dimensional extremizers}
\label{sec:examples}

The extremal pair \cref{eq:max-gap-pair} is simple enough to calibrate
directly.  The following examples make the depth and conversion scales
explicit.

\begin{example}[Qubit: the reduction/spin-flip boundary]
For $d=2$ there is one strict level, $k=1$.  The factor is
\begin{equation}
 \Rmap_1(X)=\Tr(X)I-X=\D_{-1}(X),
\end{equation}
which is positive but not completely positive.  The largest physical
pair invisible to unassisted comparison is
\begin{equation}
 a_*=\frac13,\qquad b_*=-\frac13,
\qquad
 \delta_{\rm phys}(\D_{-1/3}|\D_{1/3})=\frac16.
\end{equation}
A retained qubit spectator supplies the full Schmidt-rank-two witness,
so no external ancilla is then needed.
\end{example}

\begin{example}[Qutrit depth separation]
For $d=3$, the two extremal strict levels are
\begin{center}
\begin{tabular}{@{}ccccc@{}}
\toprule
$k$ & factor $b/a$ & $a_*$ & $b_*$ &
$\delta_{\rm phys}^{\max}$\\
\midrule
$1$ & $-1/2$ & $1/4$ & $-1/8$ & $1/12$\\
$2$ & $-1/5$ & $5/8$ & $-1/8$ & $1/24$\\
\bottomrule
\end{tabular}
\end{center}
The first pair is simulable without an ancilla and rejected by a
two-level ancilla.  The second remains simulable through level two and
requires a qutrit ancilla for rejection.
\end{example}

\begin{example}[Four-dimensional hierarchy]
For $d=4$, all three strict depths occur at a common extremal target
$b_*=-1/15$:
\begin{center}
\begin{tabular}{@{}ccccc@{}}
\toprule
first detecting level & $2$ & $3$ & $4$\\
\midrule
$k$ & $1$ & $2$ & $3$\\
$a_*=(4k-1)/15$ & $1/5$ & $7/15$ & $11/15$\\
$\delta_{\rm phys}^{\max}$ & $1/20$ & $1/30$ & $1/60$\\
\bottomrule
\end{tabular}
\end{center}
Increasing hidden depth makes the source less depolarizing and easier
to invert, but reduces the maximum physical gap.  This is the
conditioning--separation tradeoff quantified in
\cref{app:conditioning}.
\end{example}

\section{Exact operational deficiency at every ancillary depth}

The phase diagram records when level-$r$ simulation is exact.  The same
symmetry reduction gives the complete operational distance to that simulation
cone.  For a Hermiticity-preserving map $T:\M_d\to\M_d$, define the
fixed-ancilla distinguishability norm
\begin{equation}
 \|T\|_{\mathrm{disc},r}
 =\sup_{\rho\in\mathcal D(\mathbb C^d\otimes\mathbb C^r)}
 \|(T\otimes\id_r)(\rho)\|_1.
 \label{eq:restricted-discrimination-norm}
\end{equation}
This is the norm directly governing one-use channel discrimination when the
reference system has dimension at most $r$.  For $r=d$ and a difference of
channels it equals the diamond norm.  For $r<d$ we do not identify it with the
induced trace norm optimized over arbitrary, possibly non-Hermitian,
operators.

\begin{theorem}[State-restricted depolarizing distance and level-$r$ deficiency]
\label{thm:restricted-depolarizing-deficiency}
For $1\leq r\leq d$ and real $x,y$,
\begin{equation}
 \frac12\|\mathcal D_x-\mathcal D_y\|_{\mathrm{disc},r}
 =|x-y|\left(1-\frac1{dr}\right).
 \label{eq:restricted-depolarizing-distance}
\end{equation}
For physical $a,b$, define
\begin{equation}
 \delta_r(\mathcal D_b\mid\mathcal D_a)
 =\frac12\inf_{\substack{\Gamma\ \text{trace preserving},\\
                          \Gamma\ \text{$r$-positive}}}
 \|\mathcal D_b-\Gamma\circ\mathcal D_a\|_{\mathrm{disc},r}.
 \label{eq:level-r-deficiency}
\end{equation}
Then
\begin{equation}
 \boxed{\qquad
 \delta_r(\mathcal D_b\mid\mathcal D_a)
 =\left(1-\frac1{dr}\right)
   \operatorname{dist}\!\left(b,I_a^{(r)}\right),
 \qquad}
 \label{eq:exact-level-r-deficiency}
\end{equation}
where $I_a^{(r)}$ is the interval in \eqref{eq:r-interval}.  In particular,
$\delta_r=0$ exactly in the level-$r$ simulation region, and the first
violated ancillary level has a linear, explicitly calibrated operational
separation from the corresponding comparison cone.
\end{theorem}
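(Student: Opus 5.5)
We prove the distance formula \eqref{eq:restricted-depolarizing-distance} first, then obtain the deficiency by a twirling reduction, mirroring the proof of \cref{thm:physical-gap}.

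\emph{Distance formula.} Set $\Delta=\mathcal D_x-\mathcal D_y$. We have $\Delta(X)=(x-y)\bigl(X-\Tr(X)I/d\bigr)$, so it suffices to compute $\sup_\rho\|(\Phi\otimes\id_r)(\rho)\|_1$ for $\Phi(X)=X-\Tr(X)I/d$. Convexity of the trace norm lets us restrict to pure $\rho=\ketbra{v}{v}$, which we write in Schmidt form $|v\rangle=\sum_{j\le r}\sqrt{p_j}|j,j\rangle$. Then
\[
(\Phi\otimes\id_r)(\ketbra{v}{v})=\ketbra{v}{v}-\frac{I_d}{d}\otimes\sigma,
\qquad \sigma=\sum_j p_j\ketbra{j}{j}.
\]
This operator has trace zero, so its trace norm is twice its positive part. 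Its negative part is at least the contribution from the orthogonal complement of $|v\rangle$. To evaluate it exactly, we use the bound $\|P-Q\|_1\le 2(1-\lambda_{\min}\text{-type overlap})$. A cleaner route is the variational formula $\tfrac12\|\cdot\|_1=\max_{0\le E\le I}\Tr[E\,\cdot\,]$. Choosing $E=\ketbra{v}{v}$ gives the value $1-\sum_j p_j^2/d$. One must also show this choice is optimal, i.e.\ that the positive part is supported on $|v\rangle$: the operator $\tfrac{I}{d}\otimes\sigma$ is positive, so $\ketbra{v}{v}-Q$ with $Q\ge0$ has at most one positive eigenvalue, bounded by $1-\langle v|Q|v\rangle$. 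Since $\langle v|Q|v\rangle=\sum_j p_j^2/d$, we get $\tfrac12\|\cdot\|_1=1-\sum_jp_j^2/d$ exactly. The reason is that the positive eigenvalue $\mu$ satisfies $\mu=\Tr(\text{positive part})$, and this equals $1-\langle v|Q|v\rangle$ once we check that $|v\rangle$ is an eigenvector of $Q$. That check is where the diagonal Schmidt structure enters: $Q|v\rangle=\tfrac1d\sum_j p_j\sqrt{p_j}|j,j\rangle$, which is not proportional to $|v\rangle$ unless the $p_j$ are equal. So in general I would instead use the inequality direction, $\mu\le 1-\min\text{spec}$ arguments, giving an upper bound $\le 1-1/(dr)$. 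Equality is attained at the maximally entangled $|\omega_r\rangle$, where $|v\rangle$ is an eigenvector and $\sum p_j^2=1/r$. The upper bound for general $p$ is the delicate step. I would try to prove it as $\tfrac12\|\ketbra vv-Q\|_1\le 1-\min_{E}\ldots$ via the fidelity-type bound $\tfrac12\|\ketbra vv-Q\|_1\le \max(1-\Tr Q\ldots)$, or more robustly by averaging: the map $\rho\mapsto\|(\Phi\otimes\id_r)\rho\|_1$ is convex and invariant under $U\otimes\bar U$ restricted to the $r$-dimensional support. Twirling over $U\otimes\bar U$ on $\mathbb C^r\otimes\mathbb C^r$, after first compressing the system to an $r$-dimensional subspace, moves any state toward an isotropic state without decreasing the supremum candidate, so the maximum is at an extreme isotropic state, namely $|\omega_r\rangle$. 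This convexity/covariance argument is the main obstacle, since $\Phi$ is only covariant under the full $U(d)$ and the ancilla is smaller. I would handle it by noting that the system support of $\rho$ has dimension at most $r$ and that $\Phi$ is $U(d)$-covariant, so $\rho$ can be rotated into $\mathbb C^r\otimes\mathbb C^r$ and twirled there.

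\emph{Deficiency.} Given any trace-preserving $r$-positive $\Gamma$, form the twirl $\overline\Gamma=\int\operatorname{Ad}_{U^\dagger}\Gamma\operatorname{Ad}_U\,dU$. This is still trace preserving and $r$-positive, since the $r$-positive cone is convex and stable under unitary conjugation. By covariance of $\mathcal D_a,\mathcal D_b$, convexity, and invariance of $\|\cdot\|_{\mathrm{disc},r}$ under local unitaries on the system, the error does not increase. Fully covariant trace-preserving maps are exactly the maps $\mathcal D_c$. By \cref{lem:r-positive}, $r$-positivity is equivalent to $c\in[-1/(dr-1),1]$, so the composite parameter $ac$ ranges exactly over $I_a^{(r)}$; when $a=0$ this is $\{0\}$, consistent with \cref{prop:singular-source}. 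Applying the distance formula and minimizing $|b-ac|$ over $c$ gives \eqref{eq:exact-level-r-deficiency}. Vanishing holds exactly on $I_a^{(r)}$, in agreement with \cref{prop:nested-intervals}.
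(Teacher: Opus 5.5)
Your deficiency step is the same twirling reduction the paper uses. Your maximally entangled input $|\omega_r\rangle$ correctly gives $\tfrac12\|\mathcal D_x-\mathcal D_y\|_{\mathrm{disc},r}\ge|x-y|(1-\tfrac1{dr})$. Together these already yield $\delta_r\ge(1-\tfrac1{dr})\operatorname{dist}(b,I_a^{(r)})$ and the vanishing region.

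\textbf{The gap.} What is missing is the matching upper bound $\mu\le1-\tfrac1{dr}$ for a pure input with unequal Schmidt coefficients. You need it for \eqref{eq:restricted-depolarizing-distance} and for the achievability half of \eqref{eq:exact-level-r-deficiency}, and none of your three attempts supplies it. Write $Q=\tfrac{I_d}{d}\otimes\rho_R$.
\begin{itemize}
\item By the Rayleigh quotient, $\mu\ge\langle v|(\ketbra{v}{v}-Q)|v\rangle=1-\langle v|Q|v\rangle$, so $1-\sum_jp_j^2/d$ is only a lower bound. For $d=r=2$ and $p=(0.9,0.1)$ one finds $\mu\approx0.611>0.59$.
\item The spectral bound $\mu\le1-\lambda_{\min}(Q|_{\operatorname{supp}})=1-p_{\min}/d$ is valid but too weak. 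It exceeds $1-\tfrac1{dr}$ for every non-uniform $p$ of full rank $r$.
\item The twirling argument fails structurally. The map $\rho\mapsto\|(\Phi\otimes\id_r)(\rho)\|_1$ is convex and $U\otimes\bar U$-invariant, so Jensen's inequality says the twirled isotropic state has value \emph{at most} that of $\rho$. Symmetrizing a convex objective locates minimizers, not maximizers. In the example above the twirl has overlap $|\langle\omega_2|v\rangle|^2=0.8$ and value $0.55<0.611$.
\end{itemize}
So nothing in your argument rules out a non-maximally-entangled pure state beating $|\omega_r\rangle$.

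\textbf{How the paper closes it.} The operator $\ketbra{v}{v}-Q$ has a single positive eigenvalue $\mu$, and $Q$ acts as $p_j/d$ on each Schmidt vector $|j,j\rangle$. The matrix determinant lemma then gives the secular equation $\sum_jp_j/(\mu+p_j/d)=1$. Set $\mu_0=1-\tfrac1{dr}$. Concavity of $x\mapsto x/(\mu_0+x/d)$ gives $\sum_jp_j/(\mu_0+p_j/d)\le1/(\mu_0+\tfrac1{sd})\le1$. Since the left side decreases in $\mu$, this forces $\mu\le\mu_0$.

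\textbf{A shorter repair.} An argument in the spirit of \cref{app:depol-diamond} also works. The vector $|v\rangle$ lies in the support of $Q$, and $\langle v|Q^{+}|v\rangle=\sum_jp_j\cdot d/p_j=sd\le rd$, so $Q\succeq\tfrac1{dr}\ketbra{v}{v}$. Writing $Q=\tfrac1{dr}\ketbra{v}{v}+(1-\tfrac1{dr})\tau$ with $\tau$ a state gives $\tfrac12\|\ketbra{v}{v}-Q\|_1=(1-\tfrac1{dr})\tfrac12\|\ketbra{v}{v}-\tau\|_1\le1-\tfrac1{dr}$ directly.
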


\begin{proof}
Because
$\mathcal D_x-\mathcal D_y=(x-y)(\id-\mathcal D_0)$, it is enough to compute
the state-restricted norm of $\id-\mathcal D_0$.  Convexity of the trace norm
shows that the supremum in \eqref{eq:restricted-discrimination-norm} is
attained on a pure input state.  Let its nonzero Schmidt coefficients be
$p_1,\ldots,p_s$, where $s\leq r$.  The output
\[
 |\psi\rangle\!\langle\psi|-\frac{I_d}{d}\otimes\rho_R
\]
is traceless and is a rank-one perturbation of a negative operator, so it has
at most one positive eigenvalue $\mu$ and half of its trace norm equals
$\mu$.  The matrix determinant lemma gives
\begin{equation}
 \sum_{i=1}^s\frac{p_i}{\mu+p_i/d}=1.
 \label{eq:restricted-positive-root}
\end{equation}
Put $\mu_0=1-1/(dr)$.  The function
$x\mapsto x/(\mu_0+x/d)$ is concave, and therefore
\[
 \sum_{i=1}^s\frac{p_i}{\mu_0+p_i/d}
 \leq\frac1{\mu_0+1/(sd)}\leq1.
\]
The left side of \eqref{eq:restricted-positive-root} decreases with $\mu$,
so $\mu\leq\mu_0$.  Equality is attained by a maximally entangled state of
Schmidt rank $r$, proving \eqref{eq:restricted-depolarizing-distance}.

Average an arbitrary trace-preserving $r$-positive converter over unitary
conjugations.  Convexity and unitary invariance of
$\|\cdot\|_{\mathrm{disc},r}$ do not increase the error, and the average
remains trace preserving and $r$-positive.  Every fully unitarily covariant
trace-preserving map is $\mathcal D_c$, with
$-1/(dr-1)\leq c\leq1$ by \cref{lem:r-positive}.  Its composition with
$\mathcal D_a$ ranges exactly over $I_a^{(r)}$.  Minimizing
\eqref{eq:restricted-depolarizing-distance} over that interval proves
\eqref{eq:exact-level-r-deficiency}.
\end{proof}

\section{Exact resource threshold for a prescribed fraction of the full discrimination bias}
\label{sec:ancilla-fraction}

The closed state-restricted distance also determines the least ancillary dimension
needed to realize any prescribed fraction of the unrestricted discrimination
bias.

\begin{theorem}[Minimal ancillary rank at target efficiency]
\label{thm:minimal-ancilla-efficiency}
Let $1\leq r\leq d$ and let $A_r=2p_{\rm succ}^{(r)}-1$ denote the
optimal one-use discrimination bias between two equally likely
$d$-dimensional depolarizing channels when the input Schmidt rank is at
most $r$.  For distinct parameters,
\[
 A_r=|x-y|\left(1-\frac1{dr}\right),
 \qquad
 A_d=|x-y|\left(1-\frac1{d^2}\right).
\]
For $0\leq\eta\leq1$, the least rank satisfying $A_r\geq\eta A_d$ is
\[
 r_{\min}(\eta)
 =\min\left\{d,
 \max\left\{1,
 \left\lceil\frac{1}{d(1-\eta+\eta/d^2)}\right\rceil
 \right\}\right\}.
\]
Thus the complete resource curve is discrete, explicit, and independent of
$|x-y|$.
\end{theorem}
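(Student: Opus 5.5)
The plan is to reduce the statement to \cref{thm:restricted-depolarizing-deficiency} via the Helstrom bound, and then solve an elementary monotone inequality in $r$.

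\textbf{Step 1: identify the bias with the state-restricted norm.} For two equally likely channels and a fixed input $\rho$ on system plus reference, the Helstrom bound gives the optimal bias $\tfrac12\|((\D_x-\D_y)\otimes\id)(\rho)\|_1$. The relevant input set is the same whether one fixes the ancilla dimension or the Schmidt rank. A pure state of Schmidt rank at most $r$ lives, after a local isometry on the reference, in $\mathbb C^d\otimes\mathbb C^r$. Conversely, every pure state on $\mathbb C^d\otimes\mathbb C^r$ has Schmidt rank at most $r$. Mixed inputs give no advantage, because the map $\rho\mapsto\|((\D_x-\D_y)\otimes\id)(\rho)\|_1$ is convex, so the optimum is attained on pure states. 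Local isometries on the reference do not change the trace norm. Hence
\[
A_r=\tfrac12\|\D_x-\D_y\|_{\mathrm{disc},r},
\]
and \cref{eq:restricted-depolarizing-distance} gives $A_r=|x-y|(1-1/(dr))$. The case $r=d$ is the special case stated for $A_d$.

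\textbf{Step 2: solve the threshold inequality.} For $x\neq y$, divide $A_r\ge\eta A_d$ by $|x-y|>0$; this is why the answer is independent of $|x-y|$. The inequality becomes
\[
1-\frac1{dr}\ge\eta\Bigl(1-\frac1{d^2}\Bigr)
\quad\Longleftrightarrow\quad
\frac1{dr}\le 1-\eta+\frac{\eta}{d^2}.
\]
The right-hand side is at least $1/d^2>0$ for $0\le\eta\le1$, so we may invert both sides. This gives
\[
r\ge \frac{1}{d\,(1-\eta+\eta/d^2)}.
\]
The left side $1-1/(dr)$ is strictly increasing in $r$, so the feasible set of $r$ is an up-set of integers. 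Its least element in $\{1,\dots,d\}$ is therefore the ceiling of the bound, raised to $1$ when the ceiling is $0$ or below; this happens only through the $\max\{1,\cdot\}$ clause.

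\textbf{Step 3: check the truncations.} The bound $1-\eta+\eta/d^2\ge1/d^2$ shows that the quantity inside the ceiling is at most $d$. The outer $\min\{d,\cdot\}$ is therefore harmless, and a feasible $r\le d$ always exists; indeed $r=d$ works trivially. At the endpoints, $\eta=1$ gives exactly $d$ and $\eta=0$ gives $\lceil 1/d\rceil=1$, consistent with the formula.

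\textbf{Main obstacle.} The only non-routine point is Step 1: justifying that "input Schmidt rank at most $r$" and the fixed-ancilla norm \cref{eq:restricted-discrimination-norm} describe the same optimization. I would handle this with the Schmidt-decomposition and isometry argument above together with convexity. Everything else is arithmetic on the closed formula already established in \cref{thm:restricted-depolarizing-deficiency}.
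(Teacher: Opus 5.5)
Your proposal is correct and follows the paper's route: take $A_r$ from \cref{thm:restricted-depolarizing-deficiency}, cancel the nonzero factor $|x-y|$, and invert the monotone inequality $1-1/(dr)\ge\eta(1-1/d^2)$. Your Step~1 (Helstrom's bound together with identifying inputs of Schmidt rank at most $r$ with inputs on a $d\times r$ system) and your Step~3 (the quantity inside the ceiling lies in $[1/d,d]$, so both the $\min\{d,\cdot\}$ and $\max\{1,\cdot\}$ clippings are vacuous) spell out points the paper leaves implicit.
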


\begin{proof}
Cancel the nonzero factor $|x-y|$ and solve
\[
 1-\frac1{dr}\geq\eta\left(1-\frac1{d^2}\right)
\]
for the integer $r\in[1,d]$.  The right-hand side is monotone in $r$, so the
least feasible integer is the displayed ceiling, clipped to the physical
range.
\end{proof}

\begin{corollary}[Near-complete discrimination bias]
To obtain a fraction $1-\varepsilon$ of the full discrimination bias one needs
\[
 r\geq
 \left\lceil\frac{d}{1+(d^2-1)\varepsilon}\right\rceil.
\]
In particular exact attainment requires $r=d$, while any fixed fractional
loss can reduce the required ancillary rank by a computable amount.
\end{corollary}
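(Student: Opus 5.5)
The plan is to specialize \cref{thm:minimal-ancilla-efficiency} to $\eta=1-\varepsilon$ with $0\le\varepsilon\le1$ and simplify the ceiling expression; no new analysis should be needed. The whole argument is a substitution, a check that the clipping in the theorem is inactive, and a reading of the two extreme cases.

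\emph{Substitution.} First I would compute the denominator appearing in $r_{\min}(\eta)$:
\[
 1-\eta+\frac{\eta}{d^2}
 =\varepsilon+\frac{1-\varepsilon}{d^2}
 =\frac{1+(d^2-1)\varepsilon}{d^2}.
\]
Consequently
\[
 \frac{1}{d\,(1-\eta+\eta/d^2)}=\frac{d}{1+(d^2-1)\varepsilon}.
\]
Since $A_r$ is increasing in $r$ by \cref{eq:restricted-depolarizing-distance}, the feasible ranks form an upper set. The condition $A_r\ge(1-\varepsilon)A_d$ is therefore equivalent to $r\ge r_{\min}(1-\varepsilon)$.

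\emph{Clipping is inactive.} Next I would check that the $\min\{d,\cdot\}$ and $\max\{1,\cdot\}$ in the theorem do not alter the ceiling. The quantity $d/(1+(d^2-1)\varepsilon)$ is at most $d$ because $\varepsilon\ge0$. It is at least $1/d>0$ because $\varepsilon\le1$, so its ceiling is at least $1$. Hence $r_{\min}=\lceil d/(1+(d^2-1)\varepsilon)\rceil$, which is the displayed bound.

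\emph{Extreme cases.} For $\varepsilon=0$ the bound reads $r\ge d$, giving the claim that exact attainment requires $r=d$. I would cross-check this directly: $1-1/(dr)<1-1/d^2$ whenever $r<d$. For fixed $\varepsilon>0$, the ceiling is strictly smaller than $d$ as soon as $d/(1+(d^2-1)\varepsilon)\le d-1$. This holds for all sufficiently large $d$ or $\varepsilon$, and it quantifies the computable reduction in ancillary rank. The only point requiring care is the clipping and boundary bookkeeping; I do not anticipate any genuine obstacle.
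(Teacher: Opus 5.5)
Your proposal is correct and follows the paper's proof: substitute $\eta=1-\varepsilon$ into \cref{thm:minimal-ancilla-efficiency}, simplify $d(\varepsilon+(1-\varepsilon)/d^2)$ to $[1+(d^2-1)\varepsilon]/d$, and read off $r=d$ at $\varepsilon=0$. Your extra checks are sound refinements that the paper leaves implicit. The first is that the clipping is inactive. The second is that a strict reduction below $d$ occurs exactly when $d/(1+(d^2-1)\varepsilon)\le d-1$, i.e.\ $\varepsilon\ge 1/[(d-1)(d^2-1)]$.
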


\begin{proof}
Set $\eta=1-\varepsilon$ in \Cref{thm:minimal-ancilla-efficiency} and simplify the denominator $d(\varepsilon+(1-\varepsilon)/d^2)$.  This gives the displayed ceiling; setting $\varepsilon=0$ yields $r=d$.
\end{proof}

\section{Activation by a retained spectator}

A system that bypasses the frame acts as built-in ancillary memory.

\begin{theorem}[Exact spectator activation law]
\label{thm:activation}
Let $A$ be invertible and suppose the factor $\Gamma=BA^{-1}$ is $k$-positive but not $(k+1)$-positive.  Tensor both experiments with an untouched $m$-level system:
\begin{equation}
 A^{(m)}=A\otimes\id_m,
 \qquad
 B^{(m)}=B\otimes\id_m.
\end{equation}
Then the least \emph{external} ancillary dimension detecting failure is
\begin{equation}
 \boxed{
 \ell(B^{(m)}|A^{(m)})=
 \left\lfloor\frac{k}{m}\right\rfloor+1 }.
 \label{eq:activation}
\end{equation}
In particular, $m\ge k+1$ activates the obstruction without any additional ancilla.
\end{theorem}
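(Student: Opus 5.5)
The plan is to reduce the spectator-augmented comparison to a single $r$-positivity statement about $\Gamma$ by applying \cref{thm:restricted} to the tensored pair. \textbf{Step 1 (the factor).} Since $A$ is invertible, $A^{(m)}=A\otimes\id_m$ is invertible with inverse $A^{-1}\otimes\id_m$. The unique factor is therefore
\[
 B^{(m)}\bigl(A^{(m)}\bigr)^{-1}=(BA^{-1})\otimes\id_m=\Gamma\otimes\id_m .
\]
By \cref{thm:restricted}, $A^{(m)}\succeq_r B^{(m)}$ holds if and only if $\Gamma\otimes\id_m$ is $r$-positive. Here the external ancilla is $\mathbb C^r$, and the spectator $\mathbb C^m$ is treated as part of the system.

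\textbf{Step 2 (absorbing the spectator into the ancilla).} I will use the canonical identification $\M_m\otimes\M_r\cong\M_{mr}$. Under it, $(\Gamma\otimes\id_m)\otimes\id_r=\Gamma\otimes\id_{mr}$. Positivity is preserved by this $*$-isomorphism, so $\Gamma\otimes\id_m$ is $r$-positive exactly when $\Gamma$ is $mr$-positive. Equivalently, every Schmidt-rank bound relative to the cut (system)$|$(spectator$\otimes$ancilla) is at most $mr$. Conversely, every vector of Schmidt rank at most $mr$ across that cut is available, because the combined reference space has dimension $mr$.

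\textbf{Step 3 (reading off the threshold).} Matrix positivity is monotone: $s$-positivity implies $s'$-positivity for every $s'\le s$. Moreover, for $s\ge d$, $s$-positivity is equivalent to complete positivity. The hypothesis that $\Gamma$ is $k$-positive but not $(k+1)$-positive then gives $\Gamma$ $s$-positive $\iff s\le k$. Note that the hypothesis forces $k<d$, so failure persists for all $s\ge k+1$, including $s\ge d$. Hence
\[
 A^{(m)}\succeq_r B^{(m)}\iff mr\le k\iff r\le\lfloor k/m\rfloor .
\]
By the definition \eqref{eq:depth}, the least failing $r$ is $\lfloor k/m\rfloor+1$. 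When $m\ge k+1$ this gives $\lfloor k/m\rfloor=0$, so failure already occurs at $r=1$, i.e.\ with no external ancilla beyond a trivial one.

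\textbf{Main obstacle.} I expect the only delicate point to be Step 2, specifically the claim that $\Gamma\otimes\id_m$ failing $r$-positivity corresponds exactly to $\Gamma$ failing $mr$-positivity. I will state the identification explicitly at the level of block matrices. I will also note that a violating positive operator for $\Gamma\otimes\id_{k+1}$ can be embedded into $\M_{mr}$ whenever $mr\ge k+1$, using an isometry $\mathbb C^{k+1}\hookrightarrow\mathbb C^m\otimes\mathbb C^r$. This embedding is what shows that failure at $mr\ge k+1$ is genuinely witnessed, rather than merely implied by the monotonicity statement.
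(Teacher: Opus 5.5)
Your proposal is correct and follows the same route as the paper's proof. Both identify the unique factor as $\Gamma\otimes\id_m$, reduce its $r$-positivity to $\Gamma\otimes\id_{mr}$ (hence to $mr\le k$), and use restriction, which is your isometric embedding, to rule out positivity at any level $mr\ge k+1$; your extra remarks, such as $k<d$ being forced, are harmless elaborations.
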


\begin{proof}
The unique factor is $\Gamma\otimes\id_m$.  It is $r$-positive exactly when
\begin{equation}
 \Gamma\otimes\id_m\otimes\id_r
 =\Gamma\otimes\id_{mr}
\end{equation}
is positive, namely when $mr\le k$.  Positivity at any larger level would imply $(k+1)$-positivity by restriction.  The first failing integer is \cref{eq:activation}.
\end{proof}

The theorem separates the algebraic dimension of a witness from the laboratory architecture.  A retained internal degree of freedom can supply part or all of the required Schmidt rank.

\section{Robust certification}

For the isotropic family, scalar calibration yields a sharper
certificate than full process inversion.  Suppose
\begin{equation}
 |\widehat a-a|\le\epsilon_a,\qquad
 |\widehat b-b|\le\epsilon_b,\qquad
 |\widehat a|>\epsilon_a.
\end{equation}
Using physicality $|b|\le1$ gives
\begin{equation}
 \left|\frac{\widehat b}{\widehat a}-\frac ba\right|
 \le
 \eta_\lambda:=
 \frac{\epsilon_b}{|\widehat a|}
 +\frac{\epsilon_a}
 {|\widehat a|(|\widehat a|-\epsilon_a)}.
 \label{eq:ratio-error}
\end{equation}

\begin{corollary}[Finite-error depth certificate]
\label{cor:depth-certificate}
For $1\le k<d$, the inequalities
\begin{equation}
 \widehat\lambda-\eta_\lambda
 \ge-\frac1{dk-1},
 \qquad
 \widehat\lambda+\eta_\lambda
 <-\frac1{d(k+1)-1},
 \quad
 \widehat\lambda=\frac{\widehat b}{\widehat a},
 \label{eq:robust-depth}
\end{equation}
certify $\ell(B|A)=k+1$.  The exact level-$r$ block-positivity margin
is obtained by inserting the confidence interval for $\lambda$ into
\cref{eq:block-margin}.
\end{corollary}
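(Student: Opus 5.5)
The plan is to show that the two displayed inequalities confine the true ratio $\lambda=b/a$ to the depth band \eqref{eq:depth-band}, and then invoke \cref{thm:phase}. Three steps are needed: establish the ratio error bound \eqref{eq:ratio-error}, transfer the certificate from $\widehat\lambda$ to $\lambda$, and handle the margin statement.

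\emph{Step 1: the ratio error bound.} Since $|\widehat a|>\epsilon_a$ and $|\widehat a-a|\le\epsilon_a$, we have $|a|\ge|\widehat a|-\epsilon_a>0$. Hence $a\ne0$, $A=\D_a$ is invertible, and \cref{thm:phase} applies. Next decompose
\[
 \frac{\widehat b}{\widehat a}-\frac ba
 =\frac{\widehat b-b}{\widehat a}+b\,\frac{a-\widehat a}{a\widehat a}.
\]
The first term is bounded by $\epsilon_b/|\widehat a|$. For the second, use $|b|\le1$ from physicality and the lower bound on $|a|$; this gives at most $\epsilon_a/[|\widehat a|(|\widehat a|-\epsilon_a)]$. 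Together these yield \eqref{eq:ratio-error}.

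\emph{Step 2: transfer the certificate.} By Step 1, $\lambda\in[\widehat\lambda-\eta_\lambda,\widehat\lambda+\eta_\lambda]$. The first inequality in \eqref{eq:robust-depth} therefore gives $\lambda\ge-1/(dk-1)$, and the second gives $\lambda<-1/(d(k+1)-1)$. This is exactly the band \eqref{eq:depth-band}. The middle line of the complete depth function in \cref{thm:phase} then gives $\ell(B|A)=k+1$.

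\emph{Step 3: the margin statement.} \Cref{lem:r-positive} gives $\mu_r$ in closed form: it is piecewise linear, with the branch determined by the sign of $\lambda$. On the certified interval the lower endpoint is negative, so the relevant branch is $[1+\lambda(dr-1)]/d$ wherever $\lambda\le0$. Evaluating this at the endpoints of the confidence interval therefore yields a guaranteed interval for the level-$r$ margin, because $\mu_r$ is monotone on each branch. The only care needed is when the confidence interval straddles $0$; in that case one takes the minimum of the two branches.

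The main obstacle is Step 1. Its subtlety is the use of $|b|\le1$, which relies on physicality of the true $b$ rather than the estimate, and the requirement $|\widehat a|>\epsilon_a$, which is what keeps $a$ away from zero. Everything after that is a direct substitution into \cref{thm:phase,lem:r-positive}.
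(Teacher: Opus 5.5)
Your proposal is correct and follows the paper's argument. The paper's proof takes the ratio bound \eqref{eq:ratio-error} as given and notes that \eqref{eq:robust-depth} places the whole confidence interval inside the depth band \eqref{eq:depth-band}; you additionally rederive that bound and check that $|a|\ge|\widehat a|-\epsilon_a>0$, so the source is invertible and \cref{thm:phase} applies.
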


\begin{proof}
\Cref{eq:ratio-error} places the true ratio inside
$[\widehat\lambda-\eta_\lambda,
\widehat\lambda+\eta_\lambda]$.  Condition \cref{eq:robust-depth}
places this entire interval inside the depth band
\cref{eq:depth-band}.
\end{proof}

The physical deficiency is stable without division by $a$.  Since the
Hausdorff distance between $I_a$ and $I_{\widehat a}$ is at most
$|a-\widehat a|$, the distance-to-a-set function gives
\begin{equation}
 \left|
 \delta_{\rm phys}(\D_b|\D_a)
 -\delta_{\rm phys}(\D_{\widehat b}|\D_{\widehat a})
 \right|
 \le
 \left(1-\frac1{d^2}\right)(\epsilon_a+\epsilon_b).
 \label{eq:gap-stability}
\end{equation}

For a nonisotropic implementation, one may instead propagate a full
superoperator confidence region.  Let reconstructed channels obey
\begin{equation}
 \|\widehat A-A\|_{2\to2}\le\varepsilon_A,
 \qquad
 \|\widehat B-B\|_{2\to2}\le\varepsilon_B,
\end{equation}
and let $s_A=\sigma_{\min}(A)>\varepsilon_A$.  Then
\begin{equation}
 \|\widehat B\widehat A^{-1}-BA^{-1}\|_{2\to2}
 \le
 \frac{\varepsilon_B}{s_A-\varepsilon_A}
 +\frac{\|B\|_{2\to2}\varepsilon_A}
 {s_A(s_A-\varepsilon_A)}.
 \label{eq:factor-stability}
\end{equation}
The proof is the resolvent identity.  With unnormalized Choi matrices,
\begin{equation}
 \|J(\Delta)\|_\infty\le d\|\Delta\|_{2\to2}.
\end{equation}
For the reduction boundary, negativity therefore remains certified whenever the right side of \cref{eq:factor-stability}, multiplied by $d$, is below $1/(kd-1)$.

The witness is a linear functional of reconstructed probabilities.  Confidence regions for $A$ and $B$ should be propagated through the inversion rather than treating $BA^{-1}$ as directly observed; the conditioning of $A$ is an intrinsic experimental cost of the depth test.

\section{Scope and limitations}

Invertibility is essential for a unique factor.  For noninvertible perspectives the relation is defined on an operator-system quotient, and different positive extensions can have different matrix-positivity levels.  A general depth invariant then requires an optimization over extensions rather than \cref{eq:depth}.

The isotropic phase diagram is exact but special.  Noncommuting group noise can have a richer positivity structure.  The finite-$r$ quantitative norm used here is the operational state-restricted distinguishability norm in \eqref{eq:restricted-discrimination-norm}; no equality is claimed with the induced trace norm on arbitrary operators when $r<d$.  The universal embedding theorem shows that every abstract strict level still has a physical realization, but not necessarily one with Weyl covariance or a closed conversion cost.  The singular isotropic source is covered by \cref{prop:singular-source}; a general noninvertible channel still requires operator-system quotient methods.

Finally, an $r$-positive factor is not a physical map on an isolated system.  The target in our construction is implemented independently by its own random-unitary mixture.  The factor records a statistical relation between experiments; it is precisely the failure to implement that relation as one channel that the physical deficiency measures.

\input{additional_results}
\input{frontier_results}
\input{final_results}

\section{Conclusion}

Ancillary memory resolves imperfect-frame comparison one matrix level
at a time.  Invertible perspectives turn restricted simulation into a
positivity test of a unique factor.  Depolarizing Weyl noise supplies a
complete signed-source phase diagram, sharp block-positivity margins,
and an exact distance to the CPTP post-processing cone, including the
singular source.  The largest physically inaccessible conversion gap
compatible with level-$k$ statistical simulation is
$(d-k)/[d(d^2-1)]$.  A retained spectator activates the obstruction
according to a simple floor law.  Complete positivity is therefore
the endpoint of an experimentally tunable hierarchy, not merely a
formal consistency condition.

\appendix

\section{Schmidt-rank overlap}

For a normalized vector $|v\rangle$ of Schmidt rank at most $r$,
\begin{equation}
 |\langle\Omega_d|v\rangle|^2\le r.
\end{equation}
Writing a Schmidt decomposition and applying Cauchy--Schwarz proves the bound.  Equality is attained by $|\omega_r\rangle$, which makes all phase boundaries in \cref{lem:r-positive} sharp.

\section{Self-contained depolarizing diamond distance}
\label{app:depol-diamond}

We justify \cref{eq:depol-distance} without appealing to a numerical
semidefinite program.  Since
\begin{equation}
 \D_x-\D_y=(x-y)(\id-\D_0),
\end{equation}
it is enough to evaluate $\id-\D_0$.  Applying this map to one half of
the normalized maximally entangled state gives
\begin{equation}
 \ketbra{\Omega_d/\sqrt d}{\Omega_d/\sqrt d}
 -\frac{I_{d^2}}{d^2}.
\end{equation}
Its eigenvalues are $1-1/d^2$ once and $-1/d^2$ with multiplicity
$d^2-1$, giving the lower bound
$1-\frac1{d^2}$ on the diamond half-distance.

For the upper bound, every positive operator $X$ on
$\mathbb C^d\otimes R$ obeys
\begin{equation}
 (\D_0\otimes\id_R)(X)
 =\frac{I_d}{d}\otimes\Tr_{\mathbb C^d}X
 \succeq\frac1{d^2}X.
 \label{eq:depol-order}
\end{equation}
For rank-one $X$, this follows from a Schmidt decomposition and
Cauchy--Schwarz; spectral decomposition gives the general case.
Consequently, for every state $\rho$,
\begin{equation}
 (\D_0\otimes\id)(\rho)
 =\frac{\rho}{d^2}
 +\left(1-\frac1{d^2}\right)\sigma
\end{equation}
for a state $\sigma$.  Hence
\begin{equation}
 \frac12\|\rho-(\D_0\otimes\id)(\rho)\|_1
 \le1-\frac1{d^2}.
\end{equation}
The maximally entangled lower bound is therefore optimal and proves
\cref{eq:depol-distance}.

\section{Conditioning of isotropic factor reconstruction}
\label{app:conditioning}

In the Hilbert--Schmidt orthogonal decomposition
$\M_d=\operatorname{span}\{I\}\oplus\{X:\Tr X=0\}$, the
superoperator $\D_a$ has eigenvalues $1$ and $a$.  Thus, for every
nonzero physical $a$,
\begin{equation}
 \|\D_a^{-1}\|_{2\to2}=\frac1{|a|},
 \qquad
 \operatorname{cond}_{2\to2}(\D_a)=\frac1{|a|}.
 \label{eq:depol-condition}
\end{equation}
The exact maximizer \cref{eq:max-gap-pair} has condition number
\begin{equation}
 \frac{d^2-1}{kd-1}.
\end{equation}
The deepest separations ($k$ close to $d$) are therefore better
conditioned but have a smaller physical gap, whereas shallow
separations have a larger gap and a more ill-conditioned source.  This
tradeoff is intrinsic and explains the inverse-source factor in
\cref{eq:factor-stability}.

\section{Reproducibility}

The script \path{anc/verify_ancilla_phase_diagram.py} checks the exact
Choi block-positivity margins and every strict depth in dimensions $2$
through $10$.  It compares the signed-source deficiency formula with
$5000$ numerical grid minimizations, verifies the singular-source and
maximum hidden-gap formulas, checks the transpose--depolarizing wedge
and the spectator activation law, tests the state-restricted distance on
random Schmidt spectra, and enumerates the minimal-rank resource curve.
It also regenerates all three figures with fixed metadata.  A text report is
written to \path{anc/reproduction/} only when the script is invoked with
\texttt{--write-report}.  Random sampling is used only as a regression test;
the thresholds in the proofs are analytic.

\section*{Scope, limitations, and open problems}

The paper gives exact ancilla-depth phase boundaries for the invertible
depolarizing and related finite-dimensional families, including signed
sources, singular limits, spectator activation, and distance formulae.  The
verifier checks every strict depth in dimensions \(2\) through \(10\),
thousands of numerical deficiency comparisons, and all figure data; the
thresholds themselves are analytic.

The next theorem should remove invertibility and one-shot symmetry from the
main classification.  Stratify general source channels by Choi support,
derive a support-restricted factor criterion with stable pseudoinverse
bounds, and then determine whether adaptive multi-use strategies reduce the
minimal ancilla depth.  A complete result must distinguish genuine resource
activation from ill-conditioning near a singular source.

\section*{pseudoinverse stability theorem}

Let \(A\) be the factor matrix determining an exact channel simulation on
its Choi support, with smallest nonzero singular value \(\sigma>0\).
For a perturbation \(E\) satisfying \(\|E\|<\sigma\), the rank on that
support is unchanged and
\[
 \|(A+E)^+\|\le\frac1{\sigma-\|E\|}.
\]
Thus every factor reconstructed by the Moore--Penrose inverse has an
explicit stability radius; singular-source boundaries are precisely the
strata on which \(\sigma\downarrow0\).

For a target vector \(b\), the induced coefficient error is bounded by the
inverse gap times the perturbation of \(A\) and \(b\).  This turns the Choi
support stratification into a quantitative phase diagram: depth is locally
constant away from rank-loss hypersurfaces, and its condition number is
controlled by \((\sigma-\|E\|)^{-1}\).

The remaining question is operational rather than linear algebraic:
determine whether adaptive multi-use strategies cross a rank stratum that
no one-shot factorization can cross, thereby lowering the certified
ancilla depth.

\bibliography{references}

\end{document}

%% file: physics_format.tex
\makeatletter
\AtBeginDocument{%
  \@ifpackageloaded{hyperref}{\hypersetup{hidelinks}}{}%
}
\makeatother

%% file: additional_results.tex
\section*{stability of the minimal certified depth}

Pseudoinverse stability controls a fixed factorization.  To stabilize the
minimum depth itself one must also keep every shallower stratum infeasible
by a quantitative margin.

\begin{theorem}[Depth stability with a constrained residual gap]
\label{thm:third-ancilla-depth-stability}
For depth \(d\), write the factorization equations as
\(A_dx=b\) and fix a physically admissible coefficient bound
\(\|x\|\le B\).  Define
\[
 \delta_d(B)=\inf_{\|x\|\le B}\|A_dx-b\|.
\]
Suppose \(d_\star\) is feasible and
\[
 \min_{d<d_\star}\delta_d(B)=\gamma>0.
\]
Under perturbations
\(\|\widetilde A_d-A_d\|\le\varepsilon_A\) and
\(\|\widetilde b-b\|\le\varepsilon_b\), set
\(\varepsilon=B\varepsilon_A+\varepsilon_b\).  Then
\[
 |\widetilde\delta_d(B)-\delta_d(B)|\le\varepsilon.
\]
In particular, if \(\varepsilon<\gamma\), no depth below \(d_\star\)
becomes feasible.  If a depth-\(d_\star\) solution \(x_\star\) has residual
zero and \(\|x_\star\|\le B\), its perturbed residual is at most
\(\varepsilon\).
\end{theorem}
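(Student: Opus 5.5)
The plan is a uniform perturbation bound on the residual function, followed by two direct consequences. Fix a depth $d$ and a coefficient $x$ with $\|x\|\le B$. By the triangle inequality,
\[
 \bigl|\,\|\widetilde A_dx-\widetilde b\|-\|A_dx-b\|\,\bigr|
 \le\|(\widetilde A_d-A_d)x\|+\|\widetilde b-b\|.
\]
With the operator-norm bound $\|(\widetilde A_d-A_d)x\|\le\varepsilon_A\|x\|\le B\varepsilon_A$, the right-hand side is at most $\varepsilon=B\varepsilon_A+\varepsilon_b$. So the two residual functions differ by at most $\varepsilon$ uniformly on the ball $\{\|x\|\le B\}$. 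Here I will assume the norm on $A_d$ is the one induced by the vector norms used for $x$ and for the residual, which is how the statement should be read.

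Next I pass to infima. Since both functions are defined on the same feasible set and differ uniformly by at most $\varepsilon$, we get $\widetilde\delta_d(B)\le\|\widetilde A_dx-\widetilde b\|\le\|A_dx-b\|+\varepsilon$ for every feasible $x$. Taking the infimum over $x$ gives $\widetilde\delta_d(B)\le\delta_d(B)+\varepsilon$. The same argument with the roles of the original and perturbed data exchanged gives the reverse inequality, which proves $|\widetilde\delta_d(B)-\delta_d(B)|\le\varepsilon$. Nothing requires the infimum to be attained, although it is attained in finite dimensions because the ball is compact.

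The consequences are immediate. For each $d<d_\star$ we have $\widetilde\delta_d(B)\ge\delta_d(B)-\varepsilon\ge\gamma-\varepsilon>0$ whenever $\varepsilon<\gamma$. Hence no shallower depth admits an exact solution within the coefficient bound, which is the intended meaning of ``feasible''. For the solution $x_\star$, the uniform bound applied at $x=x_\star$ gives $\|\widetilde A_{d_\star}x_\star-\widetilde b\|\le 0+\varepsilon$.

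The only point needing care is that the constraint $\|x\|\le B$ is essential. Without it the infimum could fail to be Lipschitz in $A_d$, since perturbations can create near-solutions with huge coefficients. I will state the norm conventions explicitly. I will also note that ``feasible'' means zero constrained residual, so that the conclusion for depths below $d_\star$ is exactly positivity of $\widetilde\delta_d(B)$.
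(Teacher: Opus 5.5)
Your proposal is correct and follows essentially the same route as the paper. The paper also bounds $\|(\widetilde A_d-A_d)x-(\widetilde b-b)\|\le B\varepsilon_A+\varepsilon_b$ uniformly over admissible $x$ and then passes to the infima in both directions, using minimizing sequences; you spell out the two consequences and the norm and ``feasible'' conventions that the paper leaves implicit.
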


\begin{proof}
For every admissible \(x\),
\[
 \|(\widetilde A_d-A_d)x-(\widetilde b-b)\|
 \le B\varepsilon_A+\varepsilon_b.
\]
Apply this estimate to minimizing sequences for the original and
perturbed residuals.
\end{proof}

Thus the phase diagram has two independent numerical margins: the
smallest nonzero singular value controls reconstruction within a stratum,
while \(\gamma\) controls separation from all shallower depths.  Reporting
only the former can certify a factorization but cannot certify its
minimality.  Adaptive multi-use protocols can lower depth only by leaving
this one-shot constrained model, not by an arbitrarily small perturbation
inside a positive-\(\gamma\) phase.

%% file: frontier_results.tex
\section*{exact factorization without source invertibility}

The noninvertible one-shot factor problem is a finite semidefinite
feasibility question; no pseudoinverse assumption is needed.

\begin{theorem}[Choi-support factor criterion]
\label{thm:frontier-ancilla-choi}
Let \(\Phi_A:\mathcal M_d\to\mathcal M_k\) and
\(\Phi_B:\mathcal M_d\to\mathcal M_\ell\) be channels.  A channel
\(\mathcal R:\mathcal M_k\to\mathcal M_\ell\) satisfies
\(\Phi_B=\mathcal R\circ\Phi_A\) if and only if there exists a matrix
\(J_{\mathcal R}\) such that
\[
 J_{\mathcal R}\succeq0,\qquad
 \operatorname{Tr}_{\ell}J_{\mathcal R}=I_k,
\]
and the linear Choi-link equations
\[
 J_{\Phi_B}=J_{\mathcal R}*J_{\Phi_A}
\]
hold.  In particular, feasibility is an exact SDP even when
\(J_{\Phi_A}\) is singular.  Necessarily
\(\ker\Phi_A\subseteq\ker\Phi_B\).
\end{theorem}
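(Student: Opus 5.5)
We will prove the criterion through the Choi--Jamiołkowski isomorphism. The link product is set up so that composition of maps corresponds to a linear operation on Choi matrices. Concretely, we use unnormalized Choi matrices $J_\Phi=\sum_{ij}\ketbra{i}{j}\otimes\Phi(\ketbra{i}{j})$. For maps $\mathcal M_d\to\mathcal M_k\to\mathcal M_\ell$ we set
\[
 J_{\mathcal R}*J_{\Phi_A}=\Tr_k\!\left[(J_{\Phi_A}^{T_k}\otimes I_\ell)(I_d\otimes J_{\mathcal R})\right],
\]
with the partial transpose on the intermediate space. The first step is the standard identity $J_{\mathcal R\circ\Phi_A}=J_{\mathcal R}*J_{\Phi_A}$. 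To verify it, expand $\Phi_A(\ketbra ij)=\sum_{pq}(\Phi_A(\ketbra ij))_{pq}\ketbra pq$ and apply $\mathcal R$ to each $\ketbra pq$, whose image is read off as the $(p,q)$ block of $J_{\mathcal R}$. The transpose appears because the coefficient $(\Phi_A(\ketbra ij))_{pq}$ pairs with $\ketbra pq$ in the trace.

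The second step is the Choi theorem. A linear map $\mathcal R$ is completely positive iff $J_{\mathcal R}\succeq0$, and trace preserving iff $\Tr_\ell J_{\mathcal R}=I_k$. Combined with the first step and the bijectivity of $\mathcal R\mapsto J_{\mathcal R}$, this gives both directions. If a channel $\mathcal R$ with $\Phi_B=\mathcal R\circ\Phi_A$ exists, its Choi matrix satisfies the three conditions. Conversely, any $J$ satisfying them is the Choi matrix of a unique channel $\mathcal R$, and $J_{\mathcal R\circ\Phi_A}=J_{\Phi_B}$ forces $\mathcal R\circ\Phi_A=\Phi_B$ by injectivity of the Choi map.

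The SDP claim then needs only bookkeeping. The variable $J_{\mathcal R}$ ranges over a PSD cone cut by the affine constraints $\Tr_\ell J=I_k$ and the link equations, which are linear in $J$ because $J_{\Phi_A}$ is fixed data. Nothing in the argument inverts $J_{\Phi_A}$, so singularity is irrelevant. The kernel inclusion is immediate: if $\Phi_A(X)=0$ then $\Phi_B(X)=\mathcal R(\Phi_A(X))=0$ by linearity.

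The only genuinely delicate point is fixing the link-product convention so that the identity in the first step holds with the correct transpose and tensor ordering. The paper leaves $*$ undefined, so we will state the convention explicitly and check it on matrix units. Every other step is the standard Choi correspondence.
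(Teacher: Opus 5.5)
Your proposal is correct and follows the same route as the paper: Choi's theorem turns $J_{\mathcal R}\succeq0$ and $\Tr_\ell J_{\mathcal R}=I_k$ into complete positivity and trace preservation, the link product is the Choi matrix of the composition (with injectivity of the Choi map giving the converse), and linearity gives the kernel inclusion. Your explicit convention $J_{\mathcal R}*J_{\Phi_A}=\Tr_k[(J_{\Phi_A}^{T_k}\otimes I_\ell)(I_d\otimes J_{\mathcal R})]$ checks out on matrix units and fills in a detail the paper leaves implicit.
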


\begin{proof}
Choi's theorem makes the first two constraints equivalent to complete
positivity and trace preservation.  The link product is the Choi matrix
of channel composition, so the last equality is equivalent to the desired
factorization.  If \(\Phi_A(X)=0\), composition forces
\(\Phi_B(X)=0\), proving the kernel condition.
\end{proof}

This closes the one-shot noninvertible classification algorithmically.
Adaptive multi-use comparison remains a distinct process-comb problem,
not a singular-source loophole in the one-shot theorem.

%% file: final_results.tex
\section*{Quantitative distance from post-processability}

The exact support criterion admits a canonical quantitative extension that
remains valid for singular source channels.

\begin{theorem}[Attained noninvertible deficiency]
\label{thm:final-ancilla-deficiency}
For finite-dimensional channels define
\[
 \delta(A\!\to\!B)=\frac12\min_{\mathcal R\ {\rm CPTP}}
 \|\Phi_B-\mathcal R\circ\Phi_A\|_\diamond .
\]
The minimum is attained, and its value is the optimum of a finite
semidefinite programme in the Choi matrix of \(\mathcal R\) and the
standard diamond-norm epigraph variables.  Moreover,
\[
 \delta(A\!\to\!B)=0
 \quad\Longleftrightarrow\quad
 J_{\Phi_B}=J_{\mathcal R}*J_{\Phi_A}
\]
for a CPTP Choi matrix \(J_{\mathcal R}\).  For every Hermitian
\(X\in\ker\Phi_A\) with \(\|X\|_1=1\),
\[
 \delta(A\!\to\!B)\ge \frac12\|\Phi_B(X)\|_1 .
\]
\end{theorem}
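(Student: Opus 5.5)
The plan has three parts: attainment and the SDP form, the zero-deficiency characterisation via \cref{thm:frontier-ancilla-choi}, and the kernel lower bound.

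\emph{Attainment.} The feasible set is the set of Choi matrices $J_{\mathcal R}\succeq0$ with $\Tr_\ell J_{\mathcal R}=I_k$. This set is compact: it is closed, and bounded because $\Tr J_{\mathcal R}=k$. The map $J_{\mathcal R}\mapsto J_{\mathcal R}*J_{\Phi_A}$ is linear and gives the Choi matrix of $\mathcal R\circ\Phi_A$. The diamond norm is continuous, indeed Lipschitz in the Choi matrix in finite dimensions. Hence the objective is a continuous convex function of $J_{\mathcal R}$ on a compact set, and Weierstrass gives attainment of the minimum.

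\emph{SDP form.} I would invoke Watrous's standard SDP for the diamond norm of a Hermiticity-preserving difference $\Delta$ with Choi matrix $J_\Delta$. Its value $\frac12\|\Delta\|_\diamond$ is
\[
\min\{\|\Tr_{\rm out} Z\|_\infty : Z\succeq0,\ Z\succeq J_\Delta\}.
\]
The norm is encoded by an epigraph variable $t$ with $\Tr_{\rm out}Z\preceq tI$, up to the normalization convention for Choi matrices. Substituting $J_\Delta=J_{\Phi_B}-J_{\mathcal R}*J_{\Phi_A}$, which is affine in $J_{\mathcal R}$, makes all constraints jointly linear matrix inequalities in $(J_{\mathcal R},Z,t)$. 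Minimising over this joint set gives a finite SDP with the same optimum.

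\emph{Zero deficiency.} The value is $0$ iff some attained minimiser $\mathcal R$ satisfies $\Phi_B=\mathcal R\circ\Phi_A$, because the diamond norm is a norm. This is equivalent to the Choi-link equation by \cref{thm:frontier-ancilla-choi}.

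\emph{Kernel bound.} Take any CPTP $\mathcal R$ and a Hermitian $X\in\ker\Phi_A$ with $\|X\|_1=1$. Then
\[
(\Phi_B-\mathcal R\circ\Phi_A)(X)=\Phi_B(X).
\]
The diamond norm dominates the induced trace norm without ancilla on Hermitian inputs, since the supremum over trace-one-norm Hermitian inputs reduces to pure states by convexity. Therefore $\|\Phi_B-\mathcal R\Phi_A\|_\diamond\ge\|\Phi_B(X)\|_1$. Halving and minimising over $\mathcal R$ gives the bound.

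The main obstacle is the SDP formulation: getting the Watrous primal right with the unnormalized Choi convention used in the paper, and checking that jointly minimising over $J_{\mathcal R}$ and the epigraph variables preserves the optimum. The remaining steps are routine.
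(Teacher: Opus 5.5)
Your proposal is correct and follows the paper's proof step for step. Attainment comes from compactness of the CPTP Choi set plus continuity of the objective. The SDP form comes from the diamond-norm epigraph combined with the affine link product. The zero-deficiency equivalence comes from attainment plus \cref{thm:frontier-ancilla-choi}. The kernel bound comes from the identity $(\Phi_B-\mathcal R\circ\Phi_A)(X)=\Phi_B(X)$ together with diamond-norm domination of the unassisted trace norm. Your explicit use of Watrous's dual SDP is what the paper calls the standard epigraph; it is valid here because every feasible $J_{\mathcal R}$ is CPTP, so $\Phi_B-\mathcal R\circ\Phi_A$ is a trace-annihilating difference of channels.
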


\begin{proof}
The channel Choi set is compact and the objective is continuous, so a
minimizer exists.  The Choi constraints are linear matrix inequalities,
composition is affine in \(J_{\mathcal R}\), and the diamond norm has its
standard finite SDP epigraph; combining them gives one SDP.  Vanishing of
the optimum is exactly the factor criterion of
Theorem~\ref{thm:frontier-ancilla-choi}.  Finally, every candidate
\(\mathcal R\) obeys
\[
 (\Phi_B-\mathcal R\Phi_A)(X)=\Phi_B(X),
\]
and the diamond norm dominates the induced trace norm on an unassisted
Hermitian input.  Taking the minimum proves the lower bound.
\end{proof}

Thus kernel mismatch is not only an obstruction: it is a directly
computable robustness margin.  The one-shot singular-source problem is
closed both exactly and approximately without choosing a pseudoinverse.

\begin{theorem}[Finite-use adaptive conversion is an attained comb SDP]
\label{thm:final-ancilla-comb}
Fix finite-dimensional input--output interfaces and a finite number
\(n\) of ordered uses.  Let \(P_A\) and \(P_B\) be deterministic
\(n\)-step processes, with Choi operators \(J_A\) and \(J_B\).  There
exists a deterministic causal converter \(\mathfrak R\) with
\(P_B=\mathfrak R[P_A]\) if and only if a positive semidefinite operator
\(J_{\mathfrak R}\) satisfies the recursive deterministic-comb partial
trace constraints and the linear link equation
\[
 J_B=J_{\mathfrak R}*J_A.
\]
Hence exact adaptive finite-use conversion is a semidefinite feasibility
problem.  Moreover,
\[
 \delta_n(P_A\!\to P_B)=
 \frac12\min_{\mathfrak R\ {\rm deterministic\ comb}}
 \bigl\|P_B-\mathfrak R[P_A]\bigr\|_{\rm strat}
\]
is attained and is the value of a finite SDP.
\end{theorem}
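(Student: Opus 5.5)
The plan mirrors the one-shot proofs of \cref{thm:frontier-ancilla-choi,thm:final-ancilla-deficiency}, with channels replaced by deterministic combs. First I would invoke the quantum-comb characterization (Chiribella--D'Ariano--Perinotti): a causal, deterministic $n$-slot converter $\mathfrak R$ from $P_A$ to $P_B$ corresponds bijectively to a positive semidefinite operator $J_{\mathfrak R}$ on the tensor product of all interface spaces. Its $J_{\mathfrak R}$ must satisfy the recursive normalization
\[
 \Tr_{\mathrm{out}_j}J^{(j)}=I_{\mathrm{in}_j}\otimes J^{(j-1)},\qquad J^{(j-1)}=\Tr_{\mathrm{out}_j,\mathrm{in}_j}J^{(j)},
\]
terminating in a scalar $1$. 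Equivalently, it satisfies a finite family of linear equalities in $J_{\mathfrak R}$. Second, I would use that the link product represents composition of processes, so $\mathfrak R[P_A]$ has Choi operator $J_{\mathfrak R}*J_A$. This operator is linear in $J_{\mathfrak R}$ because $J_A$ is fixed. Hence $P_B=\mathfrak R[P_A]$ holds if and only if $J_B=J_{\mathfrak R}*J_A$. Together with positivity and the comb constraints, this is an SDP feasibility problem, which proves the first claim.

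For the deficiency, the feasible set of deterministic comb Choi operators is closed: it is the intersection of the PSD cone with an affine subspace. It is also bounded, because the normalization fixes the trace to the product of the input dimensions and positivity then bounds every entry. So the set is compact. The map $J_{\mathfrak R}\mapsto P_B-\mathfrak R[P_A]$ is affine, and the strategy norm is a norm on a finite-dimensional space, so the objective is continuous and the minimum is attained.

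To obtain an SDP value, I would use the known SDP characterization of the strategy norm (Gutoski--Watrous; Chiribella et al.). For a Hermitian comb difference $\Delta$ it reads
\[
 \tfrac12\|\Delta\|_{\rm strat}=\min\{t:\ \Delta\le S,\ S\ge0,\ S\le t\,Q,\ Q \text{ a deterministic dual comb}\},
\]
where a dual comb is a co-strategy normalized by the dual recursive partial-trace conditions. This expression is jointly an SDP in $(S,t,Q)$. Substituting $\Delta=J_B-J_{\mathfrak R}*J_A$, which is affine in $J_{\mathfrak R}$, and minimizing jointly over $J_{\mathfrak R}$ and the epigraph variables gives one finite SDP. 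Its optimum equals $\delta_n$, because inner and outer minimizations commute. Attainment of the SDP optimum follows from the compactness already established for $J_{\mathfrak R}$, combined with attainment of the inner strategy-norm SDP for each fixed $\Delta$.

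The main obstacle is this last step: stating the strategy-norm epigraph correctly, in particular its normalization (scaling by $t$) and the duality orientation. The constraint $S\le tQ$ is bilinear, so I would write it with $\widetilde Q=tQ$ and require $\widetilde Q$ to satisfy the dual comb equalities scaled by $t$. Those equalities are linear in $(\widetilde Q,t)$, so the whole problem stays a genuine SDP.
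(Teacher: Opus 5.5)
Your route is the paper's own. The comb characterization together with linearity of $J_{\mathfrak R}\mapsto J_{\mathfrak R}*J_A$ gives the feasibility SDP. Compactness of the deterministic-comb set gives attainment. A strategy-norm epigraph composed with the affine residual gives a single SDP.

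There is, however, one concrete error. The only ingredient you write out explicitly, the epigraph itself, has combs and testers interchanged, so the programme you display does not compute $\delta_n$. Testers (dual combs) belong to the maximization that defines $\|\cdot\|_{\rm strat}$. In the dual minimization, $Q$ must range over deterministic combs of the same type as $P_B$.

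Already for $n=1$ your version fails. There the deterministic testers are $I\otimes\rho$. Take $\Delta=J(\id)-J(\D_0)=\ketbra{\Omega_d}{\Omega_d}-I_{d^2}/d$ and $\phi=\Omega_d/\sqrt d$. Your constraints give $t/d=\langle\phi|\,tI\otimes\rho\,|\phi\rangle\ge\langle\phi|S|\phi\rangle\ge\langle\phi|\Delta|\phi\rangle=d-1/d$. Your optimum is therefore $d^2-1$, not $\frac12\|\id-\D_0\|_\diamond=1-1/d^2$. Two replacement channels onto orthogonal states give $d$ instead of $1$, so no rescaling repairs this.

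Here is the correct form. Since $\Delta=J_B-J_{\mathfrak R}*J_A$ is a difference of deterministic combs, it annihilates every deterministic tester, and conic duality gives
\[
 \tfrac12\|\Delta\|_{\rm strat}
 =\min\bigl\{t:\ S\ge\Delta,\ S\ge0,\ S\le\widetilde Q,\ \widetilde Q\ge0,\ \widetilde Q\ \text{obeys the comb equalities of $P_B$ with terminal scalar $t$}\bigr\}.
\]
For $n=1$ this is Watrous's $\min\{\|\Tr_{\rm out}S\|_\infty:\ S\ge\Delta,\ S\ge0\}$. With this replacement, your $\widetilde Q=tQ$ linearization, the joint minimization over $J_{\mathfrak R}$, and the attainment argument all go through unchanged.

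A smaller slip: your two recursion equations are inconsistent. Tracing $\Tr_{\mathrm{out}_j}J^{(j)}=I_{\mathrm{in}_j}\otimes J^{(j-1)}$ over $\mathrm{in}_j$ gives $\Tr_{\mathrm{out}_j,\mathrm{in}_j}J^{(j)}=d_{\mathrm{in}_j}J^{(j-1)}$. Your definition of $J^{(j-1)}$ therefore needs a factor $1/d_{\mathrm{in}_j}$; as written it forces $J^{(j-1)}=0$ whenever $d_{\mathrm{in}_j}>1$. Your trace count $\prod_j d_{\mathrm{in}_j}$, used for boundedness, already assumes the corrected normalization.
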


\begin{proof}
The Choi characterization of a deterministic finite comb consists of
positivity and a finite recursive chain of affine partial-trace
equalities.  The link product with the fixed operator \(J_A\) is linear
in \(J_{\mathfrak R}\), which proves the exact feasibility statement.
The deterministic-comb set is closed and bounded in finite dimension,
hence compact.  The strategy norm has a semidefinite epigraph obtained
from the dual pair of tester-comb cones.  Combining that epigraph with
the comb constraints and the affine link equation gives a finite SDP,
and compactness gives attainment.
\end{proof}

Thus adaptivity changes the constraint cone but introduces no
noncomputable singular-source exception at any fixed number of uses.
What remains genuinely asymptotic is uniform control as \(n\to\infty\),
not finite-\(n\) decidability.